\documentclass[11pt]{article}
\usepackage[left=1in, right=1in, top=1in, bottom=1in]{geometry}

\usepackage{tabularx,ragged2e,multirow}
\newcolumntype{C}{>{\Centering}X}
\usepackage{booktabs}

\usepackage{graphicx} 
\usepackage{float} 
\usepackage{enumitem} 
\usepackage{verbatim}
\usepackage[most]{tcolorbox} 
\usepackage[T1]{fontenc}
\usepackage[section]{placeins}

\usepackage{amsmath, amsthm, amssymb, amsfonts, mathtools}

\newtheorem{theorem}{Theorem}[section]
\newtheorem{definition}[theorem]{Definition}
\newtheorem{proposition}[theorem]{Proposition}
\newtheorem{lemma}[theorem]{Lemma}

\usepackage{bm}

\newcommand{\abstain}{\bot}

\DeclareMathOperator*{\argmax}{\mathrm{argmax}}

\newcommand{\bsc}{\mathbf{c}}

\usepackage[normalem]{ulem} 
\usepackage[numbers,sort&compress]{natbib}

\usepackage{hyperref}
\usepackage{xurl} 
\usepackage[nameinlink]{cleveref} 

\title{Audited Auctions: Reducing Harms in Advertising}

\author{%
\begin{tabular}{c@{\hspace{40pt}}c}
Tejovan Parker\thanks{Center for Computing and Data Sciences, Boston University. \href{mailto:\{tejovanp,gmaayan\}@bu.edu}{\{tejovanp,gmaayan\}@bu.edu}} &
Gabriel McDonnell Maayan\footnotemark[1] \\[8pt]
Francisco Marmolejo-Cossío\thanks{Department of Computer Science, Boston College. Berkman Klein Center and School of Engineering and Applied Sciences, Harvard University. Input Output Global (IOG). \href{mailto:marmolf@bc.edu}{marmolf@bc.edu}} &
Marshall Van Alstyne\thanks{Questrom School of Business, Boston University. Sloan School of Management, Massachusetts Institute of Technology. \href{mailto:mva@bu.edu}{mva@bu.edu}}
\end{tabular}%
}
\date{July 17, 2026}

\begin{document}
\maketitle
\begin{abstract}
\footnotesize
Although standard auction mechanisms help truthfully reveal preferences of \emph{bidders}, they can inadvertently result in unbounded harms when they fail to account for externalities caused by bid allocations affecting \textit{non-bidders}. Ad markets, that buy and sell \emph{user} attention represent such auctions.  This research explores a welfare improving auctioneer's audit-and-penalty mechanism that helps screen the worst externalities. We prove this mechanism can internalize externalities formally, then explore social welfare gains empirically.

\smallskip
\smallskip
\smallskip
\noindent\textbf{Keywords:} auctions, VCG, externalities, advertising, consumer welfare, audits, verification.
\end{abstract}
\tableofcontents

\section{Introduction}

Real time bid (RTB) ad auctions represented at least a \$14 billion market in 2024, with growth expected to exceed 11\% through 2030.\footnote{Of five market research firm estimates, we use the most conservative. The lowest was \$14.4B with growth of 16\% (\href{https://www.grandviewresearch.com/industry-analysis/real-time-bidding-market-report}{Grand View Research}), while the highest was \$21.4B with growth of 11\% (\href{https://www.marketresearchfuture.com/reports/real-time-bidding-market-7674}{Market Research Future}).} As a transaction between an ad buyer and an ad seller, these produce externalities in the sale of third party user attention. Standard auction mechanisms that optimize disproportionately for buyer-seller value do not optimize for total social welfare. 
This is one reason social media platforms famously prioritize harmful, misleading, engaging content that places ``profits over people''\citep{georgia_wells_is_2021}. 
This research seeks to re-balance auction optimization to account for externalities.

We study simple auctions -- identical-item, unit-allocation -- in the presence of probabilistic penalties that depend on externalities which we refer to as `audits'. Ad buyers, who participate absent audits, exit the market when their social harms are large enough.  We show formally, and with computational simulation, when they improve social welfare. Combining data from X-Twitter with third party research on advertising, we simulate auctions and roughly estimate welfare gains from audits on the order of \$1-10 million per month from political advertisements on one major social media platform. This estimate is illustrative, and is not accurate enough to be a reliable quantitative forecast.

Of theoretical and practical interest, our mechanism(s) improve on alternative externality solutions that are infeasible based on transaction costs and information asymmetries.  Pricing externalities might normally occur if those affected by harms could also place bids in ad auctions. That approach not only disadvantages under-resourced populations, but also fails completely in attention markets based on a straightforward ``inspection paradox.'' 
To judge whether a message is worthy of attention, a recipient must attend to the message but, upon judging it \emph{un}worthy, recipients cannot reclaim their attention.  Further, the number of ways any given individual might be affected by any other two-party transactions is arbitrarily large, placing undue burden on those experiencing harms.


Our mechanism instead places greater responsibility on speakers promoting content. If a promoted message generates sufficiently harmful externalities, speakers may incur penalties tied to those harms. This shifts part of the burden of harmful information environments away from affected users and toward content promoters themselves.

\subsubsection*{Contributions and Paper Outline}

Our contributions are threefold. First, we introduce auctions with audits, a mechanism-design framework in which externality-dependent penalties induce endogenous participation constraints. Second, we characterize structural properties of participant and recipient audit mechanisms, including revelation-principle-style reductions and welfare-maximizing implementations under VCG. Third, using data derived from X Community Notes, we simulate the effects of audit-based mechanisms in attention-market environments.

This paper proceeds as follows. The next section \ref{sec:motivation_lit_review} reviews related literature and motivates our work, including a toy illustration of how standard methods like VCG auctions struggle to incorporate complete welfare and may lead to arbitrarily negative externalities. We follow with a formal model and proofs of our main propositions in sections \ref{sec:model}-\ref{sec:recipient-audit}, while discussing limitations and dangers of audits in section \ref{sec:dangers_of_audits}. In Section \ref{sec:numerical-empirical} we characterize welfare optimal audits in more general settings using numerical simulations. In section \ref{sec:data-collection} we also construct an empirical dataset from X-Twitter posts and community notes, and roughly estimate real social welfare gains. Our main contribution is to offer a novel solution to a significant social problem, motivating audits as a method for improving welfare in auctions with externalities and imperfect information. Our results can also be used to understand natural processes which function like audits and penalties. We end with discussion and limitations in section \ref{sec:conclusion}. Additional materials are available in the \hyperref[appendix:optimization]{Appendix}. 

\section{Motivation and Related Work}\label{sec:motivation_lit_review}

\subsection*{Toy Illustration of Unbounded Harm from VCG}

Consider an auction allocating a single indivisible item among many bidders with heterogeneous private valuations. The auctioneer uses a VCG mechanism: the highest-valuation bidder wins and pays the second-highest bid, truthfully eliciting preferences and maximizing bidder welfare.

We augment the standard welfare term $v_i$ with an externality $e_i$, so total welfare depends on both the winning bidder's private valuation and the broader effects of allocation. We define $v_i$ to include bidder $i$'s direct utility and any ``other-regarding preferences'' \citep{fehr1999theory} already internalized, such as altruism or fairness concerns. The term $e_i$ captures effects \emph{not} internalized in the bidder's decision calculus and, absent constraints, may become arbitrarily large in magnitude. Thus, the welfare contribution $(v_i+e_i)$ of the highest bidder can be arbitrarily negative, leading to arbitrarily bad welfare.

Examples provide useful intuition. These might include ads for (i) fossil fuel vehicles that do not account for global warming, (ii) fast fashion that do not account for textile and chemical waste, (iii) disposable consumables that do not account for landfill and microplastics, (iv) politics that do not account for polarization and institutional dysfunction, or (iv) scam medications that do not account for ill health and failed herd immunity. These exhibit \emph{production} externalities, e.g. chemical waste from manufacture, \emph{consumption} externalities, e.g. carbon emissions from driving, and even \emph{attention} externalities, e.g. focused outrage displacing focused learning or family time. Harms are delayed, diffused, and borne unequally among third parties. 

\subsection{Externalities and Information Asymmetry in Attention Markets}

Modern attention markets allocate visibility through auction mechanisms, including sponsored-search and advertising auctions \cite{edelman_internet_2007, aggarwal_truthful_2006, varian_position_2007}. Although practical systems involve quality adjustments, real-time bidding, budgets, and heterogeneous display positions, their core strategic properties can be studied through simpler multi-item models. We use a unit-demand multiple-identical-item auction as a tractable abstraction.

Classical auction mechanisms primarily optimize advertiser utility, platform revenue, or bidder welfare. However, promoted content may also generate broader externalities affecting listeners, users, or society more generally. These externalities may be positive (e.g.\ educational or socially beneficial information) or negative (e.g.\ misinformation, manipulation, spam, or harmful advertising). As digital platforms increasingly mediate economic, political, and social interactions, these externalities become an increasingly important component of welfare in attention markets \cite{vettehen_attention_2023}.

Mechanisms such as VCG can maximize total welfare when all affected parties values are reported to the mechanism \cite{vickrey_counterspeculation_1961, clarke_multipart_1971, groves_incentives_1973}. Related approaches incorporate social costs into sponsored-search auctions \cite{abrams_ad_2007} and automated bidding \cite{deng_autobidding_2023}. In practice, however, centralized estimation of information externalities is costly, and platforms have repeatedly demonstrated willingness to maximize engagement at welfare's expense \cite{vettehen_attention_2023}.

Users could in principle report their own preferences over content. The core difficulty is informational: the value of information is often uncertain before consumption and may remain unknown even afterward—a feature characteristic of credence goods \cite{emons_credence_1997, loewenstein_economics_2025}. Traditional quality signals such as speaker reputation or institutional affiliation have become unreliable in high-volume environments where bad actors strategically ``flood the zone'' \cite{vettehen_attention_2023}. Even users formally empowered to influence allocation may therefore lack the information or attention to do so meaningfully.

Eliciting social value from advertiser and/or listener reports corresponds naturally with the interdependent-valuation setting \cite{milgrom_theory_1982, eden_private_2021, eden_constant_2023, eden_private_2024}. However, speakers face strong incentives to promote content regardless of its social value and there may be large information asymmetries. Mechanisms for the interdependent-valuation settings are limited in their ability to handle these issues.

These difficulties suggest that directly pricing externalities in attention markets is impractical, motivating mechanisms that instead shape participation incentives indirectly—without requiring centralized knowledge of externalities or complete preference elicitation from affected parties.

Our approach is inspired by several related ideas from costly signaling, verification, and decentralized speech regulation. A recent line of work proposes voluntary liability systems in which speakers may choose to stake resources or accept penalties contingent on later adjudication of their speech \cite{van_alstyne_free_2021, van_alstyne_free_2023, van_alstyne_consequentialist_2023, arbel_truth_2024, lin_towards_2023, van_alstyne_improving_2023}. These proposals are motivated by the possibility of creating decentralized alternatives to both unrestricted information environments and centralized moderation regimes. Earlier work studied related ideas in the context of spam communication and attention allocation \cite{loder_economic_2006}. Our model can be interpreted as a simplified mechanism-design abstraction inspired by these approaches.

\subsection{Verification, Costly Signaling and Entry}

More broadly, a large literature studies mechanism design with verification, evidence, auditing, and costly signaling. Foundational work on costly state verification considers settings in which a principal may pay to verify information held by strategic agents \cite{townsend_optimal_1979, gale_incentive-compatible_1985, border_samurai_1987, mookherjee_optimal_1989}. More recent work studies verification in allocation settings, including ex-post verification, noisy verification, and verification under distributional ambiguity \cite{ben-porath_optimal_2014, li_mechanism_2020, epitropou_optimal_2019, hu_screening_2024, bayrak_distributionally_2025}. Related literatures study evidence disclosure, partial verification, lie detection, statistical testing, and costly signaling \cite{green_partially_1986, ben-porath_mechanisms_2019, fotakis_mechanism_2016, bates_incentive-theoretic_2024, spence_job_1973, hartline_optimal_2008}.

Our model differs from these approaches in several important ways. First, audits occur after allocation and do not directly affect the allocation rule itself. Instead, audits shape incentives indirectly through participation decisions. Second, penalties depend on bidder externalities rather than solely on misreports or the auctioneer's own valuation. Finally, our model is closely related to auctions with participation costs or entry fees \cite{samuelson_competitive_1985, menezes_auctions_2000, cao_equilibria_2018}, but differs in that the participation cost is both endogenous to bidder externalities and controllable by the mechanism designer.

Conceptually, our approach uses audits to induce endogenous participation constraints in otherwise standard auction mechanisms. In participant-audit auctions, audits affect incentives to participate while preserving truthful bidding incentives conditional on participation. This leads to a revelation-principle-style reduction. In recipient-audit auctions, audits transform bidder values in a simple way, and in principle can obtain perfect welfare maximization.

\section{Participant Audit Model}\label{sec:model}

We consider auctions for $k \geq 1$ identical goods where $n$ bidders compete to each obtain at most one of the $k$ goods for sale. We identify the set of bidders with $[n] = \{1,\dots,n\}$. Moreover, we assume that each bidder has a type represented by two parameters: their value for the good, $v_i \in \mathbb{R}_{\geq 0}$, and an externality for them obtaining the good, $e_i \in \mathbb{R}$. 

For each bidder, we assume that $(e_i,v_i)$ is drawn according to a publicly known distribution with CDF $H_i$, and that once drawn, $(e_i,v_i)$ is private information.  However, we also assume that the auctioneer has the ability to {\it audit} bidders to learn their true externality after goods have been allocated. With this added capability for the auctioneer, we can explore mechanisms that penalize bidders {\it as a function of their externality}. Importantly, these penalties operate separately from the underlying allocation and payment rules of the auction itself, affecting bidder participation incentives rather than directly modifying bids.

\subsection{Auction Format}

In the auctions we consider, we assume that bidders make bids with respect to their value for the good, $v_i$. Externalities are not reported by bidders. In what follows, suppose that the $i$-th bidder's bid is given by $b_i \in B$, where the set $B =\mathbb{R}_{\geq 0} \cup \{\abstain\}$ is the space of potential bids. We use the notation $b = (b_1,\dots,b_n) \in B^n$ for joint bid profiles and note that a bid of $b_i = \abstain$ represents a bidder abstaining from the auction. 

\begin{definition}[Auction with Audits]
An {\it Auction with Audits} is specified by a tuple:
$$
\mathcal{A} = (x,p,r)
$$
Where each component is as follows:
\begin{itemize}
    \item $x: B^n \rightarrow \{0,1\}^n$ is the allocation function, where we use the shorthand $x_i(b) = x(b)_i \in \{0,1\}$ to represent whether the $i$-th bidder receives a good when joint bids $b \in B^n$ are made. We require that $x_i(\abstain,b_{-i}) = 0$ for any $b_{-i} \in B_{-i}$ and that $\sum_{i=1}^n x_i(b) \leq k$. 
    \item $p: B^n \rightarrow \mathbb{R}^n$ is the payment function, where we use the shorthand $p_i(b) = p(b)_i \in \mathbb{R}$ to represent the amount paid by the $i$-th bidder when joint bids $b \in B^n$ are made. We require that $p_i(\bot,b_{-i}) = 0$ for any $b_{-i} \in B_{-i}$. 
    \item $r: \mathbb{R} \rightarrow \mathbb{R}$ is a penalty function, such that a bidder with externality $e_i$ pays a penalty given by $r(e_i)$ when they participate in the auction (i.e. bid $b_i \neq \abstain$). 
\end{itemize}
\end{definition}

\subsection{Bidder Incentives and Equilibria}

\paragraph{Bidder utilities.}
We assume that bidder values for each of the indistinguishable goods are independent of other bidder values, and that each bidder demands at most one of the $k$ goods. Moreover, we assume bidders have quasi-linear utilities when not abstaining, given by:
$$
u_i(b) = v_i \cdot x_i(b) - p_i(b) - r(e_i) \cdot \mathbb{I}(b_i \neq \abstain),  
$$
where $\mathbb{I}(\cdot)$ is an indicator function. We interpret $r(e_i)$ as the bidder's {\it expected} audit penalty conditional on participation.\footnote{More concretely, suppose that the auctioneer commits to randomly auditing any given bidder with $b_i \neq \abstain$ with probability $\pi \in (0,1)$. If the auctioneer wanted to impose a penalty of $r(e_i)$ in expectation, they could instead impose a penalty of $\frac{1}{\pi}r(e_i)$ conditional on audit, resulting in expected penalty $r(e_i)$ as desired.}
Importantly, the penalty term depends only on participation and the bidder's externality, and not directly on the submitted bid itself. As a result, externalities affect incentives to participate in the auction, while conditional bidding incentives are determined entirely by the underlying auction mechanism. This separation will play a central role in the structural results that follow. Finally, as we are in a Bayesian setting, our equilibrium concepts will ultimately concern expected utilities at equilibrium.

\paragraph{Bidder strategies.} We consider bidder strategies given by $s_i: \mathbb{R}^2 \rightarrow B$, where $s_i(e_i,v_i) = b_i$ represents the bid made by the $i$-th player.

\begin{definition}[Bayes-Nash Equilibrium (BNE)] We recall that the $i$-th player's type is independently drawn from $H_i$. Consequently, we say that $s^* = (s^*_1,\dots,s^*_n)$ is a {\it Bayes-Nash Equilibrium} for auction $\mathcal{A} = (x,p,r)$ if and only if the following holds for each bidder, $i \in N$ with type $(e_i,v_i) \in \mathbb{R}^2$, and each potential bid $b_i' \in B$:
$$
\mathbb{E}_{(e_{-i},v_{-i}) \sim H_{-i}} \left[ u_i(s^*_i(e_i,v_i), s^*_{-i}(e_{-i},v_{-i})) \right] \geq 
\mathbb{E}_{(e_{-i},v_{-i}) \sim H_{-i}} \left[ u_i(b_i', s^*_{-i}(e_{-i},v_{-i})) \right]
$$
\end{definition}

Auctions with audits are similar to auctions with entry fees, with the important distinction that penalties may depend on bidder externalities. As a result, the auctioneer can indirectly influence which bidder types choose to participate in the auction. Because participation itself becomes endogenous, truthful bidding is only meaningful conditional on participation. This motivates the following equilibrium refinement.

\begin{definition}[Participant Truthful Bayes-Nash Equilibrium (PT-BNE)]
Suppose that $s^*=(s_1^*,\dots,s_n^*)$ is a Bayes-Nash Equilibrium of an auction with audits $\mathcal{A}$. We say that $s^*$ is a \emph{Participant Truthful Bayes-Nash Equilibrium (PT-BNE)} if for every bidder type $(e_i,v_i)$,
\[
s_i^*(e_i,v_i)\neq \abstain
\quad\Longrightarrow\quad
s_i^*(e_i,v_i)=v_i.
\]
\end{definition}

Having defined BNE, we specify relevant notions of implementability for auctions with audits. 

\begin{definition}[Bayes-Nash Implementable with Audits]
Suppose that $s^* = (s_1^*,\dots,s_n^*)$ is a joint strategy of all bidders with types drawn independently from distributions $H_1,\dots,H_n$. We say that $s^*$ is Bayes-Nash Implementable with Audits (BNI-A) if there exists an auction with audits, $\mathcal{A} = (x,p,r)$ such that $s^*$ is a BNE of $\mathcal{A}$.  
\end{definition}

\begin{definition}[Participant Truthful Auction]
Consider a population of $n$ bidders with types drawn from independent distributions $H_1,\dots,H_n$. We say that an auction with audits, $\mathcal{A} = (x,p,r)$ is {\it Participant Truthful (PT)} if it has a PT-BNE. 
\end{definition}

\subsection{Objectives}

Game theoretically, we are interested in auctions with audits that are PT. This is due to a revelation principle for auctions with audits, which shows that if $s^*$ is BNI-A via auction $\mathcal{A}$, then there exists a PT auction, $\mathcal{A}'$ with identical outcomes to $\mathcal{A}$ at a PT-BNE. We will provide a proof of this revelation principle in the following section, but we note that this allows us, without loss of generality, to restrict attention to PT auctions with audits.

Throughout this paper, we will also assume that the auctioneer seeks to optimize the value of bidders obtaining goods as well as the externality of the given allocation of goods. As such we define welfare as follows:
\begin{definition}[Welfare]
Define an individual bidder $i$'s welfare contribution as $w_i(e_i,v_i) = v_i + e_i$. For a given auction $\mathcal{A}$, we say that the welfare achieved with joint bids $b = (b_1\dots,b_n)$ is given by:\footnote{It is straightforward to extend our results for participant and recipient audits to welfare functions of the form $w_i=v_i+f(e_i)$ for any function $f$.}
$$
W_{\mathcal{A}}(b) = \sum_{i=1}^n x_i(b) w_i(v_i, e_i).
$$ 
\end{definition}
When clear from context we omit the subscript from welfare.\footnote{Notice that a welfare-maximizing allocation of goods may not end up allocating any goods at all if total externalities are far more negative than the individual value of goods for bidders.}

\begin{tcolorbox}[colframe=black,colback=gray!10,boxrule=0.5pt,arc=3pt]
\begin{center}
{\bf Our objective:}\\
Describe PT auctions with audits that maximize welfare, improving on prior results.
\end{center}
\end{tcolorbox}

\section{Structural Results for PT Auctions with Audits}

In this section we prove structural results for PT auctions with audits. Our main observation is that externality penalties affect bidder participation separately from the bidding behavior used by the auction mechanism itself. Conditional on participation, equilibrium bid behavior depends only on bidder values and not on externalities. This separation yields a revelation-principle-style reduction showing that it suffices to study PT auctions. It also gives a simple recipe for constructing PT auctions with audits from classical BNIC auctions that ignore bidder externalities.

\subsection{Revelation Principle for Auctions with Audits}

We begin by showing that, without loss of generality, equilibria of auctions with audits can be represented via PT auctions.
Intuitively, because externality penalties only affect participation decisions and not conditional bidding behavior, any equilibrium behavior can be simulated by a truthful direct mechanism that separately handles participation and allocation.

\begin{theorem}[Revelation Principle for Auctions with Audits]
\label{thm:revelation-auction-audit}

Suppose that $s^*=(s_1^*,\dots,s_n^*)$ is BNI-A via auction
$\mathcal{A}=(x,p,r)$. Then there exists an auction
$\mathcal{A}'=(x',p',r')$ with a PT-BNE
$\bar s=(\bar s_1,\dots,\bar s_n)$ such that for every type profile
$\{(e_i,v_i)\}_{i=1}^n$:

\begin{itemize}
    \item $s_i^*(e_i,v_i)=\abstain \iff \bar s_i(e_i,v_i)=\abstain$;

    \item the induced allocation distribution under $\bar s$ in $\mathcal{A}'$
    is identical to that induced by $s^*$ in $\mathcal{A}$;

    \item the induced payment distribution under $\bar s$ in $\mathcal{A}'$
    is identical to that induced by $s^*$ in $\mathcal{A}$.
\end{itemize}

\end{theorem}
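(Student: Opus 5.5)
The plan is the usual "simulate the equilibrium strategy inside the mechanism" argument. The one adjustment is that $\mathcal{A}'$ only receives a value report $b_i\in\mathbb{R}_{\ge 0}$ (or $\abstain$), never $e_i$. I will keep the penalty unchanged, $r'=r$, and define $x',p'$ by composing $x,p$ with a map from reported values to the bids $s^*$ would have made.

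The key observation comes first. In $\mathcal{A}$, a participating bidder's interim utility from a non-abstaining bid $b_i$ is
\[
U_i(b_i;v_i)-r(e_i),\qquad U_i(b_i;v_i)=\mathbb{E}_{H_{-i}}\!\left[v_i x_i(b_i,s^*_{-i})-p_i(b_i,s^*_{-i})\right].
\]
Here $U_i$ does not depend on $e_i$. Abstaining gives exactly $0$. So, conditional on participating, every type $(e_i,v_i)$ faces the same optimization problem over bids: it must pick a maximizer of $U_i(\cdot;v_i)$. It participates iff $\max_{b}U_i(b;v_i)\ge r(e_i)$, up to ties.

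Next I handle the simulation. For each $v_i$, let $\beta_i(\cdot\mid v_i)$ be the conditional distribution of $s_i^*(e_i,v_i)$ given $v_i$ and given $s_i^*(e_i,v_i)\neq\abstain$. By the observation, every bid in its support maximizes $U_i(\cdot;v_i)$. Define $\mathcal{A}'$ as follows: on report $b_i'=\hat v_i\neq\abstain$, the mechanism draws $\tilde b_i\sim\beta_i(\cdot\mid\hat v_i)$ independently across bidders. It then sets $x'(b')=x(\tilde b)$ and $p'(b')=p(\tilde b)$, keeping $\abstain$ as $\abstain$. The mechanism uses randomized allocation and payments; if deterministic $x'$ is insisted on, I would first show $s^*$ can be taken to bid a single $\beta_i(v_i)$ per value. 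The candidate equilibrium is $\bar s_i(e_i,v_i)=\abstain$ iff $s_i^*(e_i,v_i)=\abstain$, and $\bar s_i(e_i,v_i)=v_i$ otherwise, which gives the first bullet directly.

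Then I verify three things, in this order.
\begin{enumerate}[label=(\roman*)]
\item The induced distribution over $(\tilde b_1,\dots,\tilde b_n)$ under $\bar s$ matches the distribution of $s^*$ bids in $\mathcal{A}$. The participation sets coincide, and conditional on $v$ the simulated bids have the right law. Hence the allocation and payment distributions agree, giving the last two bullets.
\item Given that opponents' simulated bids are distributed as in $\mathcal{A}$, reporting $\hat v_i\neq\abstain$ yields $\mathbb{E}_{\beta_i(\cdot\mid\hat v_i)}[U_i(\tilde b;v_i)]-r(e_i)\le\max_b U_i(b;v_i)-r(e_i)$. Truthful reporting attains this maximum, so there is no profitable misreport among participating reports.
\item The choice between participating and abstaining is the same comparison, $\max_b U_i(b;v_i)$ versus $r(e_i)$, as in $\mathcal{A}$. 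Hence $\bar s$ is a BNE, and it is PT by construction.
\end{enumerate}

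I expect the main obstacle to be the mismatch between what $s^*$ may condition on and what $\mathcal{A}'$ can observe. Two types with the same $v_i$ but different $e_i$ may select different optimal bids in $\mathcal{A}$. The simulated bid can then reproduce the distribution of outcomes only conditional on $v_i$, not conditional on the full type $(e_i,v_i)$. This affects the welfare-relevant joint law of allocations and externalities. I would first read "induced allocation distribution" as the ex ante distribution over outcomes, for which the randomized construction is exact. If a type-by-type statement is needed, I would instead use the tie-freedom in the observation: replace $s^*$ by an equally-supported equilibrium whose participating bid depends only on $v_i$. This is still a BNE, since all such bids are optimal and the opponent bid distribution is unchanged. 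I would then check that this replacement preserves the bullets.

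A secondary technical point is that $v_i$ may have zero-probability values, so $\beta_i(\cdot\mid v_i)$ is defined only almost everywhere. For such $v_i$ I would fix $\beta_i$ to be a point mass on any maximizer of $U_i(\cdot;v_i)$.
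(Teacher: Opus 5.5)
Your main construction is the paper's own proof. Your key observation is Lemma~\ref{lemma:BNE-structure}. The paper likewise samples a simulated bid from the conditional law of $s_i^*(e_i,v_i)$ given $v_i=\hat v_i$ and participation, so it too implicitly uses a randomized $x',p'$. It then runs $\mathcal{A}$ on the simulated profile and uses the same $\bar s$. Read ex ante, this is correct, and it is all the paper's argument delivers. Two of your points go beyond what the paper checks: your handling of null-probability values, and your remark that the joint law of $(e_i,x_i)$, and hence welfare, need not be preserved.

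\textbf{Gap: the type-by-type fallback.} The step that fails is your fallback for the type-by-type reading, and the same applies to your plan for a deterministic $x'$. Collapsing each value to a single bid from the support of $\beta_i(\cdot\mid v_i)$ in general changes the law of bidder $i$'s bid. So the claim that ``the opponent bid distribution is unchanged'' is false: the opponents' $U_j$ change, and their bids under $s_j^*$ need no longer be optimal. Even if the collapsed profile is a BNE, its outcomes can differ from those of the original $s^*$ at every type whose bid you moved. The bullets, which refer to $s^*$, are then lost.

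No repair is possible. In a PT-BNE, $(e,v)$ and $(e',v)$ both report $v$, and $x'$ sees only reports. For example, take a second-price auction with $r\equiv 0$ in which the opponent has value $0$ or $1$ with probability $\tfrac12$ each and bids truthfully. A bidder with value $1$ gets the maximal $\tfrac12$ from bidding either $\tfrac12$ or $\tfrac32$, so $s^*$ may have $(e,1)$ bid $\tfrac12$ and $(e',1)$ bid $\tfrac32$. Against an opponent bid of $1$, the first type loses and the second wins under $s^*$, while any $\mathcal{A}'$ must treat both identically.

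The theorem should therefore be stated distributionally. Alternatively, add the hypothesis that participating bids under $s^*$ depend only on $v_i$ (e.g.\ unique best responses). In that case $\beta_i(\cdot\mid v_i)$ is a point mass, and the construction is deterministic and exact type by type.
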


Before we continue to the proof of Theorem \ref{thm:revelation-auction-audit}, we prove a key lemma:
\begin{lemma}[Best-response structure]
\label{lemma:BNE-structure}
Fix any joint strategy profile $s^*_{-i}$ for the other bidders. For any bidder $i$ and any value $v_i$, the set of optimal non-abstention bids for bidder $i$ is independent of the externality $e_i$. In particular, if two types $(e_i,v_i)$ and $(e'_i,v_i)$ both participate, then they have the same set of best responses among bids in $\mathbb{R}_{\geq 0}$.
\end{lemma}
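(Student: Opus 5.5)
The plan is to write bidder $i$'s expected utility from a non-abstention bid explicitly, and observe that the externality enters only as an additive constant. Fix $s^*_{-i}$ and a type $(e_i,v_i)$. For any bid $b_i \in \mathbb{R}_{\geq 0}$, we have $b_i \neq \abstain$, so the indicator equals one. The expected utility is therefore
\[
U_i(b_i; e_i, v_i) = \mathbb{E}_{(e_{-i},v_{-i}) \sim H_{-i}}\left[ v_i x_i(b_i, s^*_{-i}) - p_i(b_i, s^*_{-i}) \right] - r(e_i).
\]
We define $G_i(b_i; v_i)$ as the expectation term. The key point is that $G_i$ depends only on $b_i$, $v_i$, and the fixed strategies of the others. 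It does not involve $e_i$: the types of the other bidders are drawn independently of bidder $i$'s type, and $x$ and $p$ take only bids as arguments.

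From this decomposition, the set of maximizers of $U_i(\cdot; e_i, v_i)$ over $\mathbb{R}_{\geq 0}$ is $\argmax_{b_i \geq 0} G_i(b_i; v_i)$, because subtracting the constant $r(e_i)$ does not change which bids are optimal. That argmax set is independent of $e_i$. If it is empty, the claim holds trivially in the sense that both types have the same (empty) set of optimal non-abstention bids.

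For the ``in particular'' clause, suppose two types $(e_i,v_i)$ and $(e'_i,v_i)$ both participate at a best response. Their optimal bids in $\mathbb{R}_{\geq 0}$ are then exactly the maximizers of $G_i(\cdot; v_i)$, so the two best-response sets coincide. The same reasoning extends to comparing against abstention: abstaining yields utility $0$, so type $(e_i, v_i)$ participates exactly when $\sup_{b_i} G_i(b_i; v_i) \geq r(e_i)$. This observation is not needed for the lemma itself but will be used later.

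I expect the only delicate step to be justifying that the expectation is well defined and free of $e_i$. This relies on independence across bidders, so that conditioning on $(e_i,v_i)$ leaves the distribution $H_{-i}$ unchanged. It also relies on the fact that $r$ depends on $e_i$ only through a participation indicator that is identically one over $\mathbb{R}_{\geq 0}$. Everything else in the argument is immediate.
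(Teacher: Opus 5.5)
Your proposal is correct and follows the paper's proof essentially verbatim. Both write the interim utility of a non-abstention bid as an expectation depending only on $b_i$, $v_i$ and $s^*_{-i}$, minus the bid-independent constant $r(e_i)$, so the ranking of bids in $\mathbb{R}_{\geq 0}$, and hence the set of optimal bids, does not depend on $e_i$; your remarks on independence across bidders and on a possibly empty argmax are harmless clarifications the paper leaves implicit.
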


\begin{proof}
Fix $s^*_{-i}$ and consider bidder $i$ with type $(e_i,v_i)$. For any non-abstention bid $b_i \in \mathbb{R}_{\geq 0}$, bidder $i$'s interim utility is
\[
U_i(b_i;e_i,v_i)
=
\mathbb{E}_{(e_{-i},v_{-i})\sim H_{-i}}
\left[
v_i x_i(b_i,s^*_{-i}(e_{-i},v_{-i}))
-
p_i(b_i,s^*_{-i}(e_{-i},v_{-i}))
\right]
-
r(e_i).
\]
The final term, $r(e_i)$, does not depend on the bid $b_i$. Therefore, for any two non-abstention bids $b_i,b'_i \in \mathbb{R}_{\geq 0}$,
\[
U_i(b_i;e_i,v_i) \geq U_i(b'_i;e_i,v_i)
\]
if and only if
\[
U_i(b_i;e'_i,v_i) \geq U_i(b'_i;e'_i,v_i).
\]
Thus the ranking of all non-abstention bids is independent of $e_i$, and so the set of optimal non-abstention bids depends only on $v_i$ and the fixed strategy profile $s^*_{-i}$.
\end{proof}

The structural property established in Lemma~\ref{lemma:BNE-structure} implies that, conditional on participation, equilibrium bid behavior depends only on bidder values and not on externalities. Thus, the auction component need not observe or elicit externalities in order to simulate the allocation and payment outcomes of $s^*$; externalities enter only through the separate penalty/participation channel.

\begin{proof}
Suppose $\mathcal{A}=(x,p,r)$ has BNE $s^*=(s_1^*,\dots,s_n^*)$. For each value $v_i$, let
\[
B^*(v_i)
=
\{b_i \in \mathbb{R}_{\geq 0} : b_i \text{ is an optimal non-abstention bid for } v_i\}.
\]
By Lemma~\ref{lemma:BNE-structure}, $B^*(v_i)$ is independent of $e_i$. We construct an auction $\mathcal{A}'=(x',p',r')$ as follows. Given reports $\bar b=(\bar b_1,\dots,\bar b_n)$, define simulated bids $b=(b_1,\dots,b_n)$ independently for each bidder:
\begin{itemize}
    \item if $\bar b_i=\abstain$, set $b_i=\abstain$;
    \item otherwise sample $b_i \in B^*(\bar b_i)$ according to the conditional distribution induced by $s_i^*(e_i,v_i)$ given $v_i=\bar b_i$ and participation.
    Importantly, this simulation uses only the reported value $\bar b_i$ and the induced distribution of non-abstention bids conditional on that value; it does not require $\mathcal{A}'$ to observe $e_i$ when computing allocations or payments.
\end{itemize}
The auction $\mathcal{A}'$ then runs $\mathcal{A}$ on the simulated bid profile $b$. Finally, define
\[
\bar s_i(e_i,v_i)
=
\begin{cases}
v_i & \text{if } s_i^*(e_i,v_i)\neq \abstain,\\
\abstain & \text{otherwise.}
\end{cases}
\]
Under $\bar s$, the simulated bids reproduce the same distribution over bids, allocations, and payments as $s^*$ in $\mathcal{A}$. Since deviations in $\mathcal{A}'$ induce equivalent deviations in $\mathcal{A}$, and $s^*$ is a BNE of $\mathcal{A}$, it follows that $\bar s$ is a PT-BNE of $\mathcal{A}'$.
\end{proof}

\subsection{Creating PT Auctions with Audits from BNIC Auctions}

The structural separation between participation incentives and conditional bidding behavior yields a particularly simple method for constructing PT auctions with audits. In particular, any classical BNIC auction can be converted into a PT auction by introducing participation penalties that selectively discourage bidders with sufficiently negative externalities. To formalize this connection, we first recall the standard notion of auctions without audits.

\paragraph{Classical Auctions without Audits. }In what follows, we also consider auctions in a classical Bayesian setting where bidders have independent valuations. As before, we assume $k$ indistinguishable goods are being sold by an auctioneer to $n$ bidders, where each bidder has a private value $v_i \in \mathbb{R}_{\geq 0}$ for a good, and demands at most one good. Finally, in this classical setting, we assume that valuations are independently distributed with $v_i \sim F_i$ for a CDF $F_i$ for each bidder. 
We assume that the $i$-th bidder can make bids $b_i \in \mathbb{R}_{\geq 0}$, and that a bidding strategy for said bidder corresponds to a function $s_i: \mathbb{R}_{\geq 0} \rightarrow \mathbb{R}_{\geq 0}$ such that $b_i = s_i(v_i)$. 

An auction without audits is a tuple $\mathcal{A} = (x,p)$ consisting of an allocation rule $x: (\mathbb{R}_{\geq 0})^n \rightarrow \{0,1\}^n$, and a payment rule $p: (\mathbb{R}_{\geq 0})^n \rightarrow \mathbb{R}^n$, where $x_i(b) \in \{0,1\}^n$ is the allocation to the $i$-th bidder and $p_i(b) \in \mathbb{R}$ is the payment for the $i$-th bidder. Furthermore, $\sum_{i=1}^nx_i(b) \leq k$ for all $b \in (\mathbb{R}_{\geq 0})^n$. Finally, we assume that bidders have quasi-linear utilities given by $u_i(b) = v_i \cdot x_i(b) - p_i(b)$.  

Given an auction $\mathcal{A} = (x,p)$, we say that $s^* = (s^*_1,\dots,s^*_n)$ is a Bayes-Nash Equilibrium if and only if the following holds for any bidder $i \in N$, any value $v_i \in \mathbb{R}_{\geq 0}$ and any bid $b_i' \in \mathbb{R}_{\geq 0}$
$$
\mathbb{E}_{v_{-i} \sim F_{-i}} \left[ u_i(s^*_i(v_i), s^*_{-i}(v_{-i})) \right] \geq 
\mathbb{E}_{v_{-i} \sim F_{-i}} \left[ u_i(b_i', s^*_{-i}(v_{-i})) \right].
$$
Finally, we say that an auction $\mathcal{A} = (x,p)$ is Bayes-Nash Incentive Compatible (BNIC) if $s^* = (s^*_1,\dots,s^*_n)$ such that $s^*_i(v_i) = v_i$ is a BNE of $\mathcal{A}$. With this in hand, we are in a position to 

\begin{theorem}
\label{thm:BNIC-to-PT}
Consider $n$ bidders with types given by $(e_i,v_i) \sim H_i$, such that the marginal distribution of values is given by $v_i \sim F_i$. Suppose that $\mathcal{A} = (x,p)$ is an auction (without audits) which is BNIC for the Bayesian setting given by $v_i \sim F_i$ independently for each bidder. Then for any penalty function $r:\mathbb{R}\to\mathbb{R}$, there exists a PT auction with audits $\mathcal{A}'=(x,p,r)$ that preserves truthful bidding conditional on participation.
\end{theorem}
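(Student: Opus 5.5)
The plan is to write down an explicit candidate strategy profile and show that it is a BNE of $\mathcal{A}'=(x,p,r)$; by construction it will be a PT-BNE. The one subtlety is that $x,p$ are only defined on $(\mathbb{R}_{\geq 0})^n$. So I first extend them to $B^n$ by running the original auction on the participating bidders. Whenever some $b_j=\abstain$, I replace that bid by $0$ (or by a fixed reserve-type dummy bid), then set $x_j=p_j=0$ for abstainers. This keeps the constraints $x_i(\abstain,b_{-i})=0$, $p_i(\abstain,b_{-i})=0$ and $\sum_i x_i\leq k$. The main obstacle is already visible here. BNIC of $(x,p)$ is guaranteed only when opponents bid truthfully with values drawn from $F_{-i}$. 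Under endogenous participation, bidder $i$ instead faces a mixture of truthful bids and abstentions. I will therefore need either that the mechanism treats abstainers like a zero bid, or a stronger property such as dominant-strategy IC (as for VCG). My first attempt is to handle this through a participation-threshold construction.

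\textbf{Step 1 (participation rule).} Let $U_i(v_i)$ be bidder $i$'s interim utility from truthful participation against the others' strategy profile, not counting the penalty. Define
\[
\bar s_i(e_i,v_i)=\begin{cases} v_i & \text{if } U_i(v_i)\geq r(e_i),\\ \abstain & \text{otherwise.}\end{cases}
\]
$U_i$ itself depends on which types of the others participate, so this is a fixed-point definition. For the main line I will assume the mechanism is truthful against any opponent behaviour in which opponents either bid truthfully or abstain, which holds for VCG/dominant-strategy mechanisms. Equivalently, abstention can be mapped to a bid of $0$, so that truthful bidding stays optimal. Under that assumption $U_i$ is well defined for any participation profile. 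Existence of a consistent participation profile then follows from a monotone (threshold) fixed-point argument, for example Tarski on the lattice of participation sets ordered by inclusion. I would flag this as the step where the paper's statement implicitly leans on truthfulness being robust to abstentions.

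\textbf{Step 2 (no profitable deviation).} Fix a type $(e_i,v_i)$ and the others' strategies. By Lemma~\ref{lemma:BNE-structure}, the ranking of non-abstention bids does not depend on $e_i$, because $r(e_i)$ enters additively. Truthful bidding maximizes $\mathbb{E}[v_ix_i-p_i]$ by the (extended) incentive compatibility of $(x,p)$. Hence the best non-abstaining bid is $v_i$, and it yields $U_i(v_i)-r(e_i)$. Abstaining yields exactly $0$. The participation rule picks the larger of the two, so $\bar s$ is a BNE. Every participating type bids $v_i$, so $\bar s$ is a PT-BNE and $\mathcal{A}'$ is PT, preserving truthful bidding conditional on participation.
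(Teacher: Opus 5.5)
Your Step 2 is the paper's argument. The paper sets $g_i(v_i)=\mathbb{E}_{v_{-i}\sim F_{-i}}[u_i(v_i,v_{-i})]$, lets type $(e_i,v_i)$ participate iff $g_i(v_i)\ge r(e_i)$, and subtracts $r(e_i)$ from both sides of the BNIC inequality, treating both sides as interim utilities in $\mathcal{A}'$. You are right that this silently assumes opponents' reports are still distributed as $F_{-i}$, even though some opponent types abstain; the paper's proof does not address this. But your repair leaves a genuine gap for the theorem as stated.

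\textbf{First, you strengthen the hypothesis.} The theorem grants only BNIC, and with your extension (abstention treated as a zero bid) BNIC is not enough. Take the direct-revelation first-price auction with $n=2$ and $F_1=F_2=U[0,1]$: the highest report wins and pays half its report, which is BNIC. Let $e$ be independent of $v$ with $\Pr[e<0]=\tfrac12$, and set $r(e)=2$ for $e<0$ and $r(e)=0$ otherwise. In any PT-BNE, types with $e<0$ abstain. Types with $e\ge 0$ and $v>0$ enter, because a tiny report wins whenever the opponent abstains. Participants also report truthfully. So a participant reporting $b$ earns $\tfrac{1+b}{2}\bigl(v_i-\tfrac b2\bigr)$, which is maximized at $b=v_i-\tfrac12\neq v_i$ when $v_i>\tfrac12$. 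Hence this $(x,p,r)$ with your extension has no PT-BNE at all.

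\textbf{Second, the existence step fails as described, even granting robust truthfulness.} You give no reason for the best-response map on participation sets to be order-preserving, and for VCG it is order-reversing. More entry by others weakly raises the $k$-th highest competing bid, which lowers $U_i$ and shrinks $i$'s participation set. So Tarski does not apply. Flipping one player's order rescues $n=2$, but not $n\ge 3$.

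\textbf{The missing idea.} The theorem only asserts that \emph{some} $\mathcal{A}'=(x,p,r)$ is PT. Since $(x,p)$ is undefined on profiles containing $\abstain$, you may choose the extension so that every participant faces exactly $F_{-i}$.
\begin{itemize}
\item Fix the participation rule first, as the paper does, with abstention region $D_j=\{(e_j,v_j): g_j(v_j)<r(e_j)\}$. This region does not depend on anyone else's behaviour.
\item When $j$ abstains, the mechanism draws an independent phantom report $\tilde v_j$ from the law of $v_j$ under $H_j$ conditional on $D_j$ (when $D_j$ has positive probability). It runs $(x,p)$ on the completed profile and voids the phantom's allocation and payment. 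Abstainers thus get $x_j=p_j=0$, and at most $k$ goods go to real bidders.
\item Each opponent's effective report is then a mixture of the laws of $v_j$ given participation and given abstention, with the correct weights. That mixture is exactly $F_j$, independently across $j$.
\item A participant reporting $v'$ therefore obtains exactly $\mathbb{E}_{v_{-i}\sim F_{-i}}[v_i x_i(v',v_{-i})-p_i(v',v_{-i})]-r(e_i)$. BNIC maximizes this at $v'=v_i$, with value $g_i(v_i)-r(e_i)$.
\item Comparing with the abstention payoff $0$ shows that the rule $g_i(v_i)\ge r(e_i)$ is a best response for every type. No fixed point is needed.
\end{itemize}
The extension is randomized, as is the simulation in the paper's revelation-principle proof. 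Your robust-IC route is fine for VCG-A once existence of a participation equilibrium is shown by other means, but it does not establish the theorem for general BNIC mechanisms.
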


Intuitively, because truthful bidding is already incentive compatible in the underlying auction, the penalty function only determines which bidder types prefer to participate.

\begin{proof}

Suppose that $s^* = (s^*_1,\dots,s^*_n)$ is a truthful BNE of $\mathcal{A} = (x,p)$ in the non-audit setting and that we are given a penalty function $r: \mathbb{R} \rightarrow \mathbb{R}$ to define the auction with audit $\mathcal{A}' = (x,p,r)$. We define a strategy profile, $\bar{s} = (\bar{s}_1,\dots,\bar{s}_n)$ for $\mathcal{A}'$ which will be a PT-BNE, hence showing that $\mathcal{A}'$ is PT. 
To facilitate exposition, let us denote the interim utility of the $i$-th bidder under the truthful strategy $s^*$ in $\mathcal{A}$ by $g_i$, and note that it is given by 
$
g_i(v_i) = \mathbb{E}_{v_{-i} \sim F_{-i}} 
\left[  u_i(v_i, v_{-i})  \right].
$
With this in hand, $\bar{s}$ is specified as follows for the $i$-th bidder: 
\[
\bar{s}_i(e_i,v_i) =
\begin{cases} 
    v_i, & \text{if } g_i(v_i) \geq r(e_i) \\ 
    \abstain, & \text{otherwise}
\end{cases}
\]
To show that this is a PT-BNE of $\mathcal{A}'$, it suffices to consider the scenario where $\bar{s}(e_i,v_i) \neq \abstain$. In this scenario, notice that the following must hold for any $v_i,v_i' \in \mathbb{R}_{\geq 0}$ as $\mathcal{A}$ is truthful:
$$
g(v_i) = \mathbb{E}_{v_{-i} \sim F_{-i}} \left[  u_i(v_i, v_{-i})  \right] \geq \mathbb{E}_{v_{-i} \sim F_{-i}} \left[  u_i(v_i', v_{-i}) \right]
$$
If we subtract $r(e_i)$ from both sides, we get: 
$$
g(v_i) - r(e_i) \geq \mathbb{E}_{v_{-i} \sim F_{-i}} \left[  u_i(v_i', v_{-i}) \right] -r(e_i),
$$
where the LHS is the interim utility a player of type $(e_i,v_i)$ gets for reporting $v_i$, and the RHS is the interim utility for reporting $v_i'$. This shows that $\bar{s}$ is a PT-BNE. 
\end{proof}

Theorem \ref{thm:BNIC-to-PT} simplifies our search for welfare-optimal auctions with audits significantly. We have shown that it suffices to consider BNIC auctions in the classical Bayesian setting and simply control participation through the penalty function $r: \mathbb{R} \rightarrow \mathbb{R}$ as a means for the auctioneer to indirectly influence which bidders participate in the auction truthfully. With this in hand, for the rest of the paper we will only study auctions with audits that are based off of classical VCG auctions with the following allocation and payment rules:

\begin{definition}[VCG auction with Audits (VCG-A)]
We say that $\mathcal{A} = (x,p,r)$ is a VCG-auction with audits if it has the following allocation and payment rules for a joint bid profile $b = (b_1,\dots,b_n)$ such that $b_1 \geq b_2 \geq \dots \geq b_n$: 

\[
x_i(b) =
\begin{cases} 
    1, & \text{if } i \leq k \\ 
    0, & \text{otherwise}
\end{cases} \qquad\qquad
p_i(b) =
\begin{cases} 
    b_{k+1}, & \text{if } i \leq k \\ 
    0, & \text{otherwise}
\end{cases}
\]
\end{definition}
Note that our definition of VCG-A did not specify a penalty function $r$. This is due to Theorem \ref{thm:BNIC-to-PT}, which shows that any choice of $r$ results in a PT auction. Given our results from this section, we can refine the main objective of this paper: 

\begin{tcolorbox}[colframe=black,colback=gray!10,boxrule=0.5pt,arc=3pt]
\begin{center}
{\bf Our objective ($2^{nd}$ formulation):} \\
Find penalty functions, $r$, that maximize welfare in the VCG-A auction.
\end{center}
\end{tcolorbox}

\section{Participation Threshold Analysis}
\label{sec:participation-threshold-analysis}

A bidder $i$ participates ($b_i \neq \bot$) in a PT-BNE auction with audits $\mathcal{A}$ whenever their expected utility from participation is nonnegative:
$0 \leq \mathbb{E}_{v_{-i}\sim F_{-i}}[u_i(b)\mid b_i \neq \bot]$.
This condition induces a minimum valuation required for participation as a function of the bidder's externality. We refer to this boundary as the bidder's participation threshold and denote it by $\tau_i(e_i)$.
Because externalities affect bidder incentives only through participation, equilibrium behavior in PT auctions can be characterized entirely through these participation thresholds.
\begin{align*}
    0 &\leq
    \mathbb{E}_{v_{-i} \sim F_{-i}} 
\left[ x_i(b) \cdot  v_i  - p_i(b) - r(e_i)  \right] \\
    v_i &\geq \frac{\mathbb{E}_{v_{-i} \sim F_{-i}} 
\left[  p_i(v_i, v_{-i})  \right] + r(e_i)}{\mathbb{E}_{v_{-i} \sim F_{-i}} 
\left[  x_i(v_i, v_{-i})  \right]} \\
\tau_i(e_i) &=: \frac{\mathbb{E}_{v_{-i} \sim F_{-i}} 
\left[  p_i(\tau_i(e_i), v_{-i})  \right] + r(e_i)}{\mathbb{E}_{v_{-i} \sim F_{-i}} 
\left[  x_i(\tau_i(e_i), v_{-i})  \right]}
\end{align*}
If bidder valuations are independent and identically distributed (\textit{i.i.d.}) $v_i \stackrel{\mathrm{i.i.d.}}{\sim} F$, then for any two bidders $i,j$ with the same valuations $v_i = v_j$ the expected allocation probabilities and expected payments coincide for these bidders since the distribution of other bidder valuations is the same (i.e. $F_{-i}=F_{-j}$). Thus, we consider a common threshold function $\tau(e_i) = \tau_i(e_i) \;\; \forall {i\in[n]}$. This participation threshold behavior is visualized in Figure \ref{fig:participant_thresholds_illustrative}.

\begin{figure}[ht]
    \centering
    \includegraphics[width=0.5\linewidth]{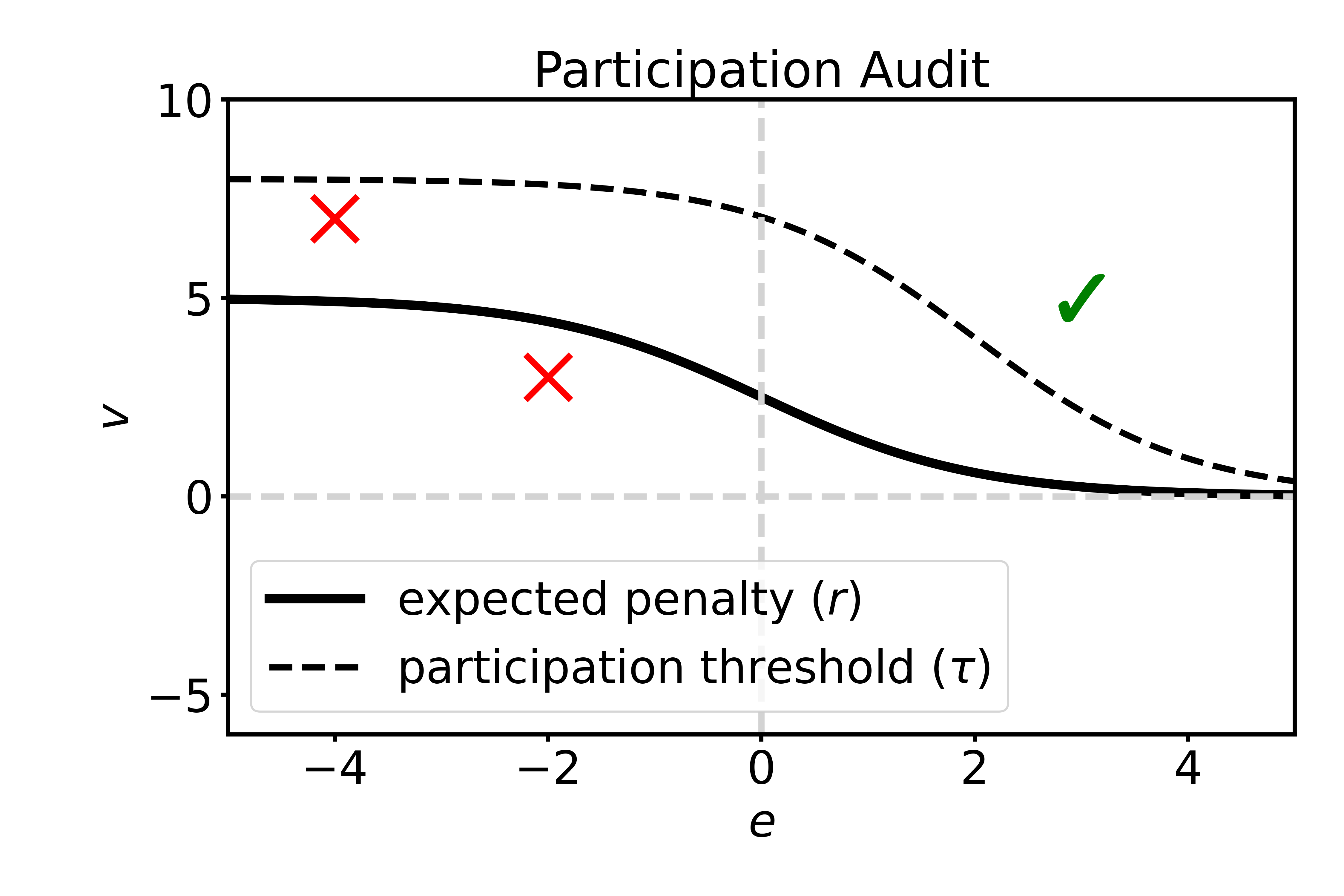}
    \caption{Participation thresholds in PT Auctions with Audits. Bidders participate whenever their value exceeds the threshold curve $\tau(e)$. The horizontal axis denotes bidder externality and the vertical axis denotes valuation. Green checks indicate participation while red crosses indicate non-participation.}
    \label{fig:participant_thresholds_illustrative}
\end{figure}

Notice, the penalty function $r$ may be fully defined by specifying an \textit{i.i.d.} marginal valuation distribution $F$ and participation threshold $\tau$.
$$r(e_i)= \mathbb{E}_{v_{-i} \sim F_{-i}} 
\left[  x_i(\tau(e_i), v_{-i})  \right]\tau(e) - \mathbb{E}_{v_{-i} \sim F_{-i}} 
\left[  p_i(\tau(e_i), v_{-i})  \right]$$
Thus, in the \textit{i.i.d.} setting we can search for optimal participation thresholds $\tau$ to find optimal penalties $r$. 

\begin{tcolorbox}[colframe=black,colback=gray!10,boxrule=0.5pt,arc=3pt]
\begin{center}
{\bf Our objective ($3^{rd}$ formulation):} \\
Find threshold functions, $\tau$, that maximize expected welfare in the VCG-A auction. 
\end{center}
\end{tcolorbox}

Characterizing optimal thresholds analytically requires reasoning about order statistics, since welfare depends on the contributions of the bidders with the largest realized values. Closed-form characterizations are generally difficult, but become tractable in the special cases studied in the following sections.

\subsection{No Competition}

When there is no competition, meaning there are at least as many goods as bidders ($k\geq n$), the optimal participation threshold is simply the zero-welfare curve,
\[
\tau^*(e)=-e.
\]
This threshold excludes exactly the bidders with negative welfare contribution while allowing all bidders with nonnegative welfare contribution to participate. Since all participating bidders can be allocated in the absence of competition, this allocation rule maximizes welfare.

\begin{proposition}
In the VCG auction with audits $\mathcal A$, when the number of goods satisfies $k\geq n$, the penalty function $r(e_i)=-e_i$ induces the participation threshold $\tau^*(e_i)=-e_i$ and maximizes welfare for any bidder types $\{(e_i,v_i)\}_{i=1}^n$.
\end{proposition}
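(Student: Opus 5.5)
The plan is to work directly from the VCG-A allocation and payment rules in the regime $k\geq n$, and then use the PT-BNE structure from Theorem~\ref{thm:BNIC-to-PT}. First, if $k\geq n$, every bidder who does not abstain is among the top $k$ bids, since at most $n\le k$ bidders participate. So $x_i(b)=1$ for every participant. Likewise, the $(k+1)$-th bid does not exist, or is an abstention, which we treat as $0$. Hence $p_i(b)=0$ for every participant. Consequently, for any realization of the others' types, a participating bidder's utility is exactly $v_i - r(e_i)$. This also shows that the expectations $\mathbb{E}[x_i(\tau(e_i),v_{-i})]=1$ and $\mathbb{E}[p_i(\tau(e_i),v_{-i})]=0$ are trivial.

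Second, I substitute these into the participation-threshold formula of Section~\ref{sec:participation-threshold-analysis}. The condition $0\le \mathbb{E}[u_i\mid b_i\ne\abstain]$ becomes $v_i\ge r(e_i)$, so with $r(e_i)=-e_i$ the threshold is $\tau(e_i)=r(e_i)=-e_i$, as claimed. Equivalently, by the inversion formula $r(e)=1\cdot\tau(e)-0$, the threshold $\tau^*(e)=-e$ corresponds exactly to $r(e)=-e$. Theorem~\ref{thm:BNIC-to-PT} guarantees that the strategy ``bid $v_i$ if $v_i\ge -e_i$, abstain otherwise'' is a PT-BNE. Here $g_i(v_i)=v_i$ because VCG is BNIC and the interim utility of truthful bidding is $v_i$.

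Third, I verify optimality pointwise for every type profile. Any allocation $y\in\{0,1\}^n$ (automatically feasible since $k\ge n$) has welfare $\sum_i y_i(v_i+e_i)\le \sum_i \max(0,v_i+e_i)$. Under the induced equilibrium, bidder $i$ participates, and is therefore allocated, exactly when $v_i+e_i\ge 0$. The realized welfare is thus $\sum_i \mathbb{I}(v_i+e_i\ge0)(v_i+e_i)=\sum_i\max(0,v_i+e_i)$, which meets the bound with equality. This holds for every profile $\{(e_i,v_i)\}$, not merely in expectation.

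The main obstacle is a matter of interpretation rather than a hard step. First, the tie case $v_i=-e_i$ is indifferent, but either choice contributes zero welfare, so it does not matter. Second, ``maximizes welfare'' should be read relative to all allocations, not just those achievable by some auction; the upper bound in the third step handles this directly. I would also remark that abstaining bidders' payments and allocations are zero by definition, so no extra terms arise.
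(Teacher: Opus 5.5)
Your proposal is correct and follows essentially the same route as the paper's proof. With $k\geq n$, every participant is allocated at zero VCG payment, so participation utility is $v_i-r(e_i)$; the penalty $r(e_i)=-e_i$ then yields the threshold $-e_i$, and the winners are exactly the bidders with $w_i=v_i+e_i\geq 0$. Your explicit upper bound $\sum_i \max(0,v_i+e_i)$ on the welfare of any allocation, and your remark on the indifferent case $v_i=-e_i$, just spell out what the paper's proof asserts in one line.
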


\begin{proof}
When $k\geq n$, every participating bidder is allocated a good. Consequently, a bidder's participation does not affect any other bidder's allocation, so all VCG payments are zero. Since every participating bidder is allocated and all payments are zero, bidder $i$'s utility from participation is $u_i=v_i-r(e_i)$.
Thus bidder $i$ participates precisely when $v_i\geq r(e_i)$. Setting $r(e_i)=-e_i$ therefore induces the threshold $\tau_i(e_i)=-e_i$. Under this threshold, bidder $i$ participates if and only if $v_i+e_i=w_i\geq 0$.
Hence exactly the bidders with nonnegative welfare contribution participate and are allocated, which maximizes welfare.
\end{proof}

\subsection{Two Discrete Bidder Types}

Competition fundamentally changes welfare-optimal participation thresholds. In the no-competition setting, it is optimal to exclude exactly the bidders with negative welfare contribution. Under competition, however, even bidders with positive welfare contribution may optimally be excluded if their participation crowds out higher-welfare bidders. We characterize the conditions for which the welfare-optimal participation threshold excludes the minus type from participating.

\begin{proposition}
\label{prop:binary-threshold}
Consider allocating a single good (k=1) among $n$ bidders' with types drawn i.i.d.\ from a two-point mass distribution defining two types: plus type  $(e^+,v^+)$ and minus type  $(e^-,v^-)$. The plus type occurs with probability $h^+$ and the minus type occurs with probability $h^-=1-h^+$. Suppose the plus type has non-negative welfare contribution and strictly greater welfare contribution than the minus type.
\[
w^+ > w^-,\ w^+>0, 
\]
where
\[
w^+ = e^+ + v^+,
\qquad
w^- = e^- + v^-.
\]
In this setting, it is welfare-optimal for the audit penalty to exclude minus types from participating whenever $v^-\leq -e^-$, or when $v^- \geq v^+$ and 
\begin{equation} \label{eq:bin-threshold}
    w^-
\leq
w^+\left(
1-\frac{(1-h^+)^n}{1-(h^+)^n} \right).
\end{equation}
\end{proposition}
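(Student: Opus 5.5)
Since the penalty $r$ only acts through participation, I would first reduce the choice of $r$ to the choice of which of the two types participate. In the VCG-A auction with $k=1$, the good goes to the highest participating bid, with ties broken uniformly at random. The plus type must always be allowed in: $w^+>0$ and $w^+>w^-$, so excluding it can never help. The comparison is then between two participation sets, $\{+\}$ and $\{+,-\}$. Each is implementable by a suitable penalty. Following the participation threshold analysis of Section~\ref{sec:participation-threshold-analysis}, choose $r(e^+)\le g(v^+)$, so plus types enter, and $r(e^-)$ either larger or smaller than the interim utility $g(v^-)$ of a minus type.

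Next I compute expected welfare in each regime. If only plus types participate, the good is allocated exactly when at least one plus type is present, so
\[
W_{\{+\}}=w^+\bigl(1-(1-h^+)^n\bigr).
\]
If both types participate, I split on the relative values.
\begin{itemize}
\item If $v^-<v^+$, a plus type wins whenever one is present, and a minus type wins otherwise. This gives $W_{\{+,-\}}=W_{\{+\}}+w^-(1-h^+)^n$. Exclusion is weakly better exactly when $w^-\le 0$, i.e.\ $v^-\le -e^-$.
\item If $v^-\ge v^+$ (ties only shift weight toward plus types, so I treat the strict case as the worst case and note the tie case), a minus type wins whenever one is present. Then
\[
W_{\{+,-\}}=w^-\bigl(1-(h^+)^n\bigr)+w^+(h^+)^n.
\]
\end{itemize}

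Exclusion in the second case is optimal iff $W_{\{+\}}\ge W_{\{+,-\}}$. This is
\[
w^+\bigl(1-(1-h^+)^n-(h^+)^n\bigr)\ge w^-\bigl(1-(h^+)^n\bigr).
\]
Dividing by $1-(h^+)^n>0$ (assuming $h^+<1$) gives exactly \eqref{eq:bin-threshold}. The condition $v^-\le -e^-$ alone also suffices when $v^-\ge v^+$. In that case $w^-\le 0$, and the right-hand side of \eqref{eq:bin-threshold} is nonnegative, because $(1-h^+)^n+(h^+)^n\le 1$. So the stated disjunction covers everything.

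The main obstacle I expect is the tie case $v^-=v^+$, where uniform tie-breaking gives a more complicated welfare expression. I would handle it by showing that tie-breaking yields welfare at least as large as the "minus always wins" expression. This holds because the realized winner is then a mixture of plus and minus types, which puts weight on the larger $w^+$. Hence \eqref{eq:bin-threshold} remains a sufficient condition. I would also check the degenerate endpoint $h^+=1$ separately, where the condition on minus types is vacuous.
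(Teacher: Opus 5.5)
Apart from one step, your argument is the paper's. You use the same split into participation regimes (the paper's $\mathcal C^0,\mathcal C^+,\mathcal C^-,\mathcal C^{+-},\mathcal C^{-+}$), the same welfare formulas, and the same rearrangement to \eqref{eq:bin-threshold}. Your observation that $w^-\le 0$ also suffices when $v^-\ge v^+$, because the right-hand side of \eqref{eq:bin-threshold} is nonnegative, fills in something the paper's proof only argues for $v^+\ge v^-$.

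The step that fails is your treatment of the tie $v^-=v^+$. You correctly note that, with both types admitted, tie-breaking gives welfare at least as large as the ``minus always wins'' expression $w^-(1-(h^+)^n)+w^+(h^+)^n$. But this cuts the wrong way. \eqref{eq:bin-threshold} only guarantees that $W_{\{+\}}$ beats that smaller expression, so a larger $W_{\{+,-\}}$ makes exclusion harder to justify, not easier. The strict case $v^->v^+$ is the most favorable case for exclusion, not the worst case.

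In fact the tie case cannot be rescued. When $v^-=v^+$ and both types participate, all bids coincide. Under uniform (or any bid-based) tie-breaking the winner is therefore a plus type with probability $h^+$, and $W_{\{+,-\}}=h^+w^++(1-h^+)w^-$. Exclusion is then optimal iff $w^-\le w^+\bigl(1-(1-h^+)^{n-1}\bigr)$, which is strictly stronger than \eqref{eq:bin-threshold} whenever $n\ge 2$ and $0<h^+<1$.

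For a concrete failure, take $n=2$, $h^+=\tfrac12$, $v^+=v^-=1$, $e^+=0$, $e^-=-0.4$. Then $w^-=0.6\le\tfrac23$, so \eqref{eq:bin-threshold} holds. Yet $W_{\{+\}}=0.75<0.8=W_{\{+,-\}}$, and both regimes are implementable (e.g.\ $r\equiv 0$ admits everyone).

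So prove the second clause only for $v^->v^+$, where your computation is exact and matches the paper's, and handle $v^-=v^+$ with the condition above. The paper's proof does not catch this either. It applies ``a minus type is allocated whenever one is present'' throughout $\mathcal C^{-+}$, including $v^-=v^+$. That would require breaking ties in favor of minus types, which an allocation rule seeing only identical bids cannot do.
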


\begin{proof}
Let $M^+$ denote the number of plus-type bidders among the $n$ bidders. Then we know, 
\[
\mathbb P[M^+=0]=(1-h^+)^n,
\qquad
\mathbb P[M^+=n]=(h^+)^n.
\]
We divide all possible participation thresholds into five regimes:
$\mathcal C^0$ (no participation),
$\mathcal C^+$ (only plus types participate),
$\mathcal C^-$ (only minus types participate),
$\mathcal C^{+-}$ (both participate with $v^+ \geq v^-$),
and $\mathcal C^{-+}$ (both participate with $v^- \geq v^+$).

In regime $\mathcal C^{+}$, where only the plus type participates, expected welfare is
\[
\mathbb E[W\mid \mathcal C^+]
=
w^+\bigl(1-(1-h^+)^n\bigr).
\]
In regime $\mathcal C^{-+}$, where both types participate and $v^- \geq v^+$, a minus type is allocated whenever at least one minus type is present; otherwise, a plus type is allocated. Hence, 
\[
\mathbb E[W\mid \mathcal C^{-+}]
=
w^-\bigl(1-(h^+)^n\bigr)
+
w^+(h^+)^n.
\]
So, excluding the minus type is weakly better than allowing both types to participate whenever
\begin{align*}
\mathbb E[W\mid \mathcal C^+]
&\geq
\mathbb E[W\mid \mathcal C^{-+}]\\
w^+\bigl(1-(1-h^+)^n\bigr)
&\geq
w^-\bigl(1-(h^+)^n\bigr)+w^+(h^+)^n.
\end{align*}
Rearranging yields Equation \ref{eq:bin-threshold}.

Therefore, when $v^-\geq v^+$ and Equation \ref{eq:bin-threshold} holds, a threshold that excludes the minus type and admits the plus type maximizes expected welfare. This condition is depicted as the blue dashed threshold in Figure \ref{fig:binary-type-regions}.

In regime $\mathcal C^{+-}$, where both types participate and $v^+ \geq v^-$, a plus type is allocated whenever at least one plus type is present, while a minus type is allocated only when all bidders are minus type. Thus the expected welfare is 
\[
\mathbb E[W\mid \mathcal C^{+-}]
=
w^+\bigl(1-(1-h^+)^n\bigr)
+
w^-(1-h^+)^n.
\]
Comparing this to regime $\mathcal C^{+}$ where the the minus type is excluded, we see excluding the minus type when  $v^-\leq v^+$ results in greater welfare than allowing both types to participate only when $w^-> 0$. This is equivalent to the optimal no-competition participation threshold, $\tau(e)=-e$, and is depicted by the dashed red line in Figure \ref{fig:binary-type-regions}.

For completeness, consider the last two regimes. Regime $\mathcal C^{-}$, where only minus types participate, is always worse than when both types participate,  since $w^+>w^-$. Regime $\mathcal C^{0}$, where no types participate, is always worse than regime $\mathcal C^{+}$, where only the plus types participate, since $w^+>0$. If the welfare contribution of the plus type were also negative, $0\geq w^+$, then regime  $\mathcal C^{0}$ would be optimal, which is equivalent to the no-competition threshold.
\end{proof}

Figure~\ref{fig:binary-type-regions} depicts the optimal participation regimes of Proposition \ref{prop:binary-threshold}. If the minus type $(e^-,v^-)$ lies in the blue or red regions relative to the plus type $(e^+,v^+)$, then any penalty function which causes the minus type to not participate and allows the plus type to participate is welfare-optimal, ie. $v^- < \tau^*(e^-)$ and $v^+ > \tau^*(e^+)$.

\begin{figure}[ht]
    \centering
    \includegraphics[width=0.5\linewidth]{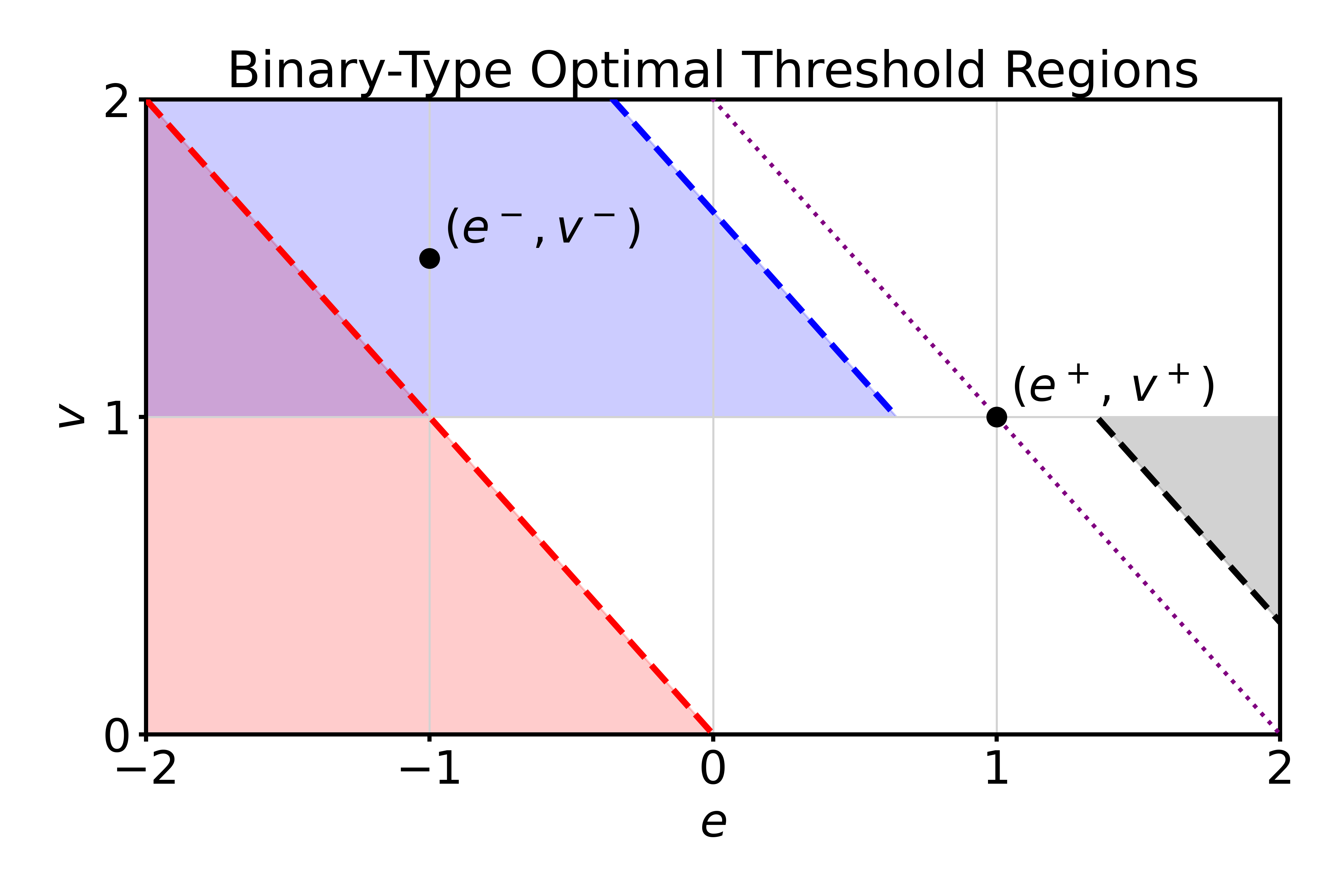}
    \caption{Regions of the type space, relative to the plus type $(e^+,v^+)$, in which all welfare-optimal thresholds exclude the minus type from participation. The dashed red line is $v=-e$. The dashed blue boundary is equal to Equation \ref{eq:bin-threshold} when $v^->v^+$. The black region is symmetric to the blue region, with the roles of the two types reversed.}
    \label{fig:binary-type-regions}
\end{figure}

This threshold can be interpreted as a welfare analogue of Myersonian reserve prices \cite{myerson_optimal_1981}: it may be optimal to leave goods unallocated in some realizations in exchange for higher expected welfare overall. Excluding minus types sacrifices welfare whenever no plus-type bidders are present, but increases welfare whenever a minus type would take allocation from a plus type with larger welfare contribution.

Equation~\ref{eq:bin-threshold} shows that minus types should participate only when their welfare contribution exceeds a sufficiently large fraction of the plus-type welfare contribution. This fraction depends on the likelihood that at least one competing plus type exists. Consequently, the optimal threshold becomes stricter as either the fraction of plus types ($h^+$) or the amount of competition ($n$) increases. Geometrically, the threshold boundary approaches the equal-welfare line $w^-\le w^+$ as competition grows.

\section{Recipient Audit Auctions}\label{sec:recipient-audit}

Previously, we have considered participant audits, where all participating bidders may incur penalties. Recipient audits are audits in which penalties are incurred only by bidders who ultimately receive a good. Whereas participant audits only affect participation, recipient audits directly modify bidders' effective valuations for receiving a good.

\begin{definition}[Auction with Recipient Audits]
An {\it Auction with Recipient Audits} is specified by a tuple:
$$
\mathcal{A}^R = (x,p,R)
$$
where each component is as follows:
\begin{itemize}
    \item $x: B^n \rightarrow \{0,1\}^n$ is the allocation function, where we use the shorthand $x_i(b)=x(b)_i \in \{0,1\}$ to represent whether bidder $i$ receives a good when joint bids $b \in B^n$ are made. We require that $x_i(\abstain,b_{-i})=0$ for any $b_{-i}\in B_{-i}$ and that $\sum_{i=1}^n x_i(b)\leq k$.

    \item $p: B^n \rightarrow \mathbb{R}^n$ is the payment function, where we use the shorthand $p_i(b)=p(b)_i \in \mathbb{R}$ to represent the amount paid by bidder $i$ under bid profile $b\in B^n$. We require that $p_i(\bot,b_{-i})=0$ for any $b_{-i}\in B_{-i}$.

    \item $R:\mathbb{R}\rightarrow\mathbb{R}$ is a recipient penalty function such that a bidder with externality $e_i$ incurs penalty $R(e_i)$ whenever they receive a good (i.e.\ whenever $x_i(b)=1$).
\end{itemize}
\end{definition}

Under recipient audits, bidder utilities become
$$
u_i(b) = (v_i - R(e_i)) \cdot x_i(b) - p_i(b).
$$

Thus, recipient penalties effectively shift bidders' valuations for receiving a good from $v_i$ to $\tilde v_i=v_i-R(e_i)$. Since we allow general penalty functions and general relationships between valuation and externality, recipient audits can arbitrarily change the distribution of `effective' valuations. In general, changing the valuation distribution may change BNE strategies. Unlike participant audits, this also breaks the separation between participation incentives and bidding incentives, and the revelation-principle-style result, previously established for participant audits.\footnote{Simulating best responses with recipient audits would require the auctioneer to know bidders' externalities directly.} Nevertheless, if the auction without audits has a dominant strategy equilibrium (DSE), then adding recipient audits induces a particularly simple transformation of bidder behavior, which we characterize in Theorem \ref{thm:recipient-audit-DSE}.

\begin{definition}[Dominant Strategy Equilibrium (DSE)] We say that $s^* = (s^*_1,\dots,s^*_n)$ is a {\it dominant strategy equilibrium} for auction $\mathcal{A}^R = (x,p,R)$ if, and only if, the following holds for each bidder, $i \in N$ with arbitrary  type $(e_i,v_i) \in \mathbb{R}^2$, each potential bid $b_i' \in B$, and all possible bid profiles of other bidders $b_{-i}\in B^{n-1}$ :
$$
 u_i(s^*_i(e_i,v_i), b_{-i}) \geq 
 u_i(b_i', b_{-i})
$$
\end{definition}

\begin{theorem}[Recipient Audit DSE]
\label{thm:recipient-audit-DSE}
Suppose that $s^*=(s_1^*,\dots,s_n^*)$ is a dominant strategy equilibrium of the auction without recipient audits $\mathcal{A}^0=(x^0,p^0)$. Then the strategy profile $s^R=(s_1^R,\dots,s_n^R)$ defined by
\[
s_i^R(e_i,v_i)=s_i^*(v_i-R(e_i))
\]
with the convention that $s_i^*(v_i-R(e_i))=\abstain$ whenever $v_i-R(e_i)<0$, is a dominant strategy equilibrium of the recipient-audit auction $\mathcal{A}^R=(x^0,p^0,R)$.
\end{theorem}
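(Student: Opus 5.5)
The argument is a pointwise reduction: with recipient audits, a bidder of type $(e_i,v_i)$ has exactly the utility of a bidder with value $\tilde v_i = v_i - R(e_i)$ in the audit-free auction $\mathcal{A}^0$. Since dominance is a statement about every fixed profile $b_{-i}$, the dominant strategy $s_i^*$ for value $\tilde v_i$ carries over directly. No expectation over types is involved, so independence and the distribution $H_i$ play no role.

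The first step is the utility identity. For any bid $b_i \in B$ and any $b_{-i}\in B^{n-1}$,
\[
u_i^R(b_i,b_{-i};e_i,v_i) = (v_i - R(e_i))\,x_i^0(b_i,b_{-i}) - p_i^0(b_i,b_{-i}) = u_i^0(b_i,b_{-i};\tilde v_i).
\]
Here $u_i^0(\cdot;\tilde v)$ denotes the quasi-linear utility in $\mathcal{A}^0$ of a bidder with value $\tilde v$. The identity also covers $b_i=\abstain$, where both sides are $0$ because $x_i^0(\abstain,\cdot)=0$ and $p_i^0(\abstain,\cdot)=0$. The penalty is attached to $x_i$, so abstention carries no penalty. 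Hence for fixed $(e_i,v_i)$ the ranking of all bids $b_i'\in B$ against any $b_{-i}$ is the same in $\mathcal{A}^R$ as for value $\tilde v_i$ in $\mathcal{A}^0$.

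The second step handles $\tilde v_i \ge 0$. Then $\tilde v_i$ is a legitimate value in $\mathcal{A}^0$, and dominance of $s_i^*$ gives $u_i^0(s_i^*(\tilde v_i),b_{-i};\tilde v_i)\ge u_i^0(b_i',b_{-i};\tilde v_i)$ for all $b_i'\in B$ and all $b_{-i}$. Transporting this through the identity gives the DSE inequality for $s_i^R(e_i,v_i)=s_i^*(\tilde v_i)$. One caveat: the audit-free setting was written with bids in $\mathbb{R}_{\ge 0}$, so I will read $\mathcal{A}^0$ as allowing $\abstain$ as well, with the paper's conventions $x_i(\abstain,\cdot)=p_i(\abstain,\cdot)=0$.

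The third step, the case $\tilde v_i<0$, is the main obstacle, because the theorem's convention prescribes $\abstain$ there. Abstaining yields $0$. Any participating bid $b_i'$ yields $\tilde v_i x_i^0 - p_i^0 \le -p_i^0(b_i',b_{-i})$. This is at most $0$ provided payments are nonnegative, or at least provided a winner never receives a net subsidy.

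I will justify this through the no-positive-transfers and individual-rationality properties of the underlying DSE. For a value-$0$ bidder in $\mathcal{A}^0$, dominance of $s_i^*(0)$ over $\abstain$ gives $-p_i^0(s_i^*(0),b_{-i})\ge 0$. That controls only the equilibrium bid, not all bids, so I will instead state the needed assumption explicitly: $p_i^0\ge 0$, which holds for VCG and hence for VCG-A. Under that assumption, $\tilde v_i x_i^0 - p_i^0 \le 0$ for every participating bid, so $\abstain$ is a dominant response. This completes the verification that $s^R$ is a DSE of $\mathcal{A}^R$.
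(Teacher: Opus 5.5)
Your proof takes the paper's route. It uses the identity $u_i^R(\cdot\,;e_i,v_i)=u_i^0(\cdot\,;\tilde v_i)$ and then substitutes $\tilde v_i=v_i-R(e_i)$ into the dominance inequality of $\mathcal{A}^0$. The difference is in care, and your version is the sounder one.

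The paper does the substitution in one line, asserting the $\mathcal{A}^0$ inequality ``for any possible valuation $v_i\in\mathbb{R}$''. But $\mathcal{A}^0$ was only set up for values in $\mathbb{R}_{\geq 0}$, and the paper never checks that the prescribed $\abstain$ is dominant when $\tilde v_i<0$.

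Your hypothesis $p_i^0\geq 0$ is not just convenient but necessary. Take VCG and refund $\epsilon>0$ to every non-abstaining bidder. Truthful bidding is still a DSE for all values in $\mathbb{R}_{\geq 0}$, because the refund is constant across participating bids and makes participating strictly better than abstaining. Yet a bidder with $\tilde v_i\in(-\epsilon,0)$ gets at least $\tilde v_i+\epsilon>0$ by bidding $0$, whatever $b_{-i}$ is. So $\abstain$ is not dominant, and the statement fails as written. VCG itself has $p_i\geq 0$, so Theorem~\ref{thm:welfare-maximal-recipient-audit} is unaffected.

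Two side remarks are off, though you do not end up relying on either. First, ``a winner never receives a net subsidy'' is not sufficient on its own, since a loser who is paid also breaks the $\tilde v_i<0$ case. Second, the value-$0$ inequality $-p_i^0(s_i^*(0),b_{-i})\geq 0$ bounds the payment from above, which is the wrong direction. Replacing it with the explicit assumption was the right call.
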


\begin{proof}
Since $s^*$ is a DSE of $\mathcal{A}^0$, we know for any set of bidders $i \in N$ with types $(v_i,e_i) \in \mathbb{R}^2$, each potential bid $b_i' \in B$, and all possible bid profiles of other bidders $b_{-i}\in B^{n-1}$:
\begin{align*}
 v_i \cdot x_i^0(s^*_i(e_i,v_i),b_{-i}) - p_i^0(s^*_i(e_i,v_i),b_{-i}) &\geq 
v_i \cdot x_i^0(b_i',b_{-i}) - p_i^0(b_i',b_{-i}) \\
\end{align*}
Since this is assumed to be true for any possible valuation $v_i\in\mathbb{R}$, it must also be true for valuation $\tilde{v}_i = v_i - R(e_i)$ since it is a real number $\tilde{v}_i\in\mathbb{R}$. 
\begin{align*}
 \tilde{v}_i \cdot x_i^0(s^*_i(e_i,\tilde{v}_i),b_{-i}) - p_i^0(s^*_i(e_i,\tilde{v}_i),b_{-i}) &\geq 
\tilde{v}_i \cdot x_i^0(b_i',b_{-i}) - p_i^0(b_i',b_{-i}) \\
\end{align*}
This is exactly the necessary and sufficient requirement for $s^*_i(e_i,\tilde{v}_i)$ to be a DSE of $\mathcal{A}^R$.
\end{proof}

To understand the implications of Theorem~\ref{thm:recipient-audit-DSE}, consider recipient audits added to an auction with truthful DSE, such as VCG. In this case, equilibrium bids become $b_i=v_i-R(e_i)$. Bidders with $v_i<R(e_i)$ have negative effective value and rationally abstain, so the participation threshold for recipient audits is simply equal to the penalty function, $\tau(e_i)=R(e_i)$.

For any bid amount $b'$, the points $(e_i,v_i)$ satisfying $b'=v_i-R(e_i)$ form an iso-bid line in type space. Geometrically, these iso-bid lines are vertical shifts of the penalty curve $v=R(e_i)$. Since any auction with a truthful DSE (which we call a DSIC Auction: Dominant Strategy Incentive Compatible) must allocate to the $k$ highest bids (as in VCG), then the iso-bid lines are also the iso-allocation lines. Figure~\ref{fig:recipient-iso-lines} compares these iso-bid lines against the iso-welfare-contribution lines, where welfare contribution is given by $w_i=e_i+v_i$. This geometric relationship immediately yields Theorem~\ref{thm:welfare-maximal-recipient-audit}.

\begin{figure}[ht]
    \centering
    \includegraphics[width=0.9\linewidth]{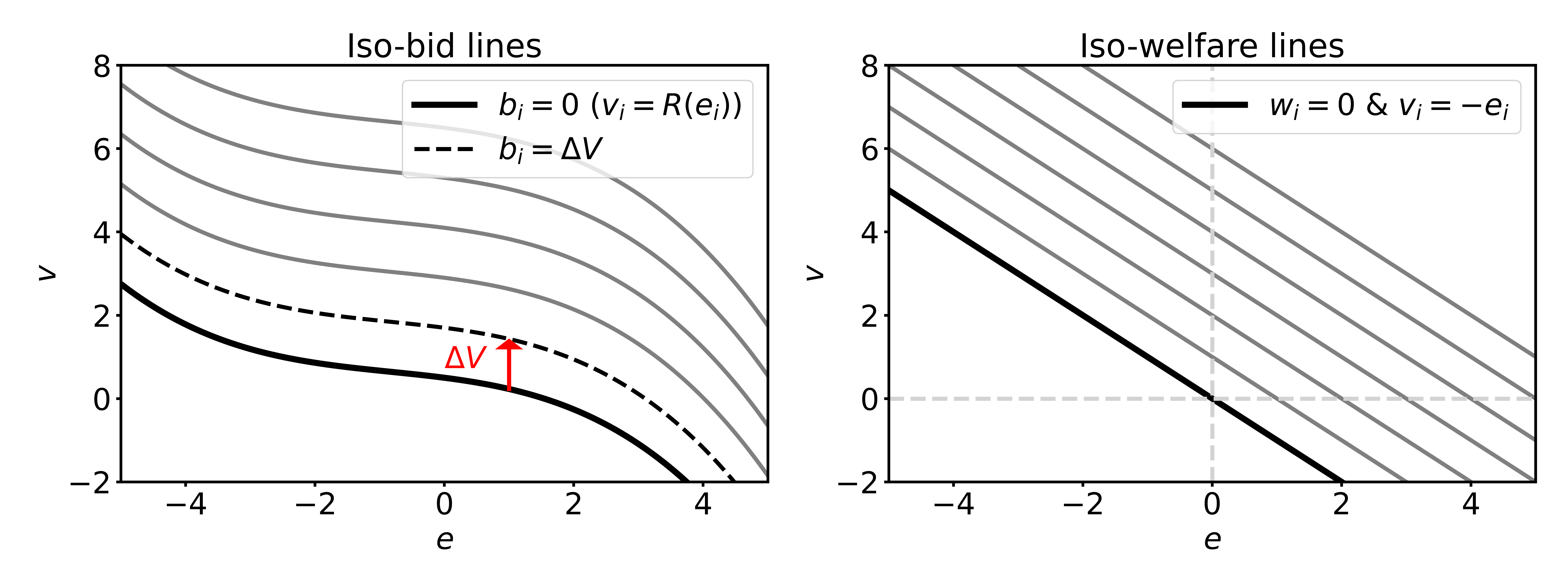}
    \caption{Visualization of the level set lines of bids and welfare contributions in auctions with recipient audits.}
    \label{fig:recipient-iso-lines}
\end{figure}

\begin{theorem}[Welfare Maximization with Recipient-Audited VCG]
\label{thm:welfare-maximal-recipient-audit}
For any bidders $i\in[n]$ with types given by $(e_i,v_i)_{i\in[n]}$, the recipient-audit auction
\[
\mathcal{A}^*=(x^{vcg},p^{vcg},R),
\]
where $(x^{vcg},p^{vcg})$ are the VCG allocation and payment rules and
\[
R(e_i)=-e_i,
\]
results in the welfare-maximizing allocation in DSE.
\end{theorem}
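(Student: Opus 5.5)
We plan to apply Theorem~\ref{thm:recipient-audit-DSE} directly to VCG. Plain VCG, $\mathcal{A}^0=(x^{vcg},p^{vcg})$ in the unit-demand $k$-identical-item setting, has the truthful dominant strategy equilibrium $s_i^*(v)=v$. By Theorem~\ref{thm:recipient-audit-DSE}, in $\mathcal{A}^*=(x^{vcg},p^{vcg},R)$ with $R(e_i)=-e_i$ the profile
\[
s_i^R(e_i,v_i)=v_i-R(e_i)=v_i+e_i=w_i
\]
is a DSE. Here bidder $i$ abstains exactly when $w_i<0$. One point needs checking: the earlier theorem's convention was stated for effective values that may be negative. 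We will verify that abstention is indeed dominant for $\tilde v_i=w_i<0$. Any non-abstaining bid gives utility $\tilde v_i x_i-p_i\le \tilde v_i x_i\le 0$, because VCG payments are nonnegative and are charged only to winners. Abstaining gives exactly $0$.

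Next we identify the DSE allocation. The participating bidders are exactly those with $w_i\ge 0$, and each bids $w_i$. VCG awards goods to the $k$ highest bids among participants. So the allocation consists of the $\min(k,|\{i:w_i\ge 0\}|)$ bidders with the largest nonnegative welfare contributions. Ties in $w_i$ are broken arbitrarily and do not affect welfare.

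Finally we show this allocation is welfare-maximizing. Any feasible allocation is a set $S\subseteq[n]$ with $|S|\le k$, and its welfare is $\sum_{i\in S}w_i$. This is maximized by taking the top $k$ values of $w_i$ and discarding any that are negative. The justification is a standard exchange argument. Removing a member with $w_i<0$ strictly increases the sum. Replacing a member by an outside bidder with larger $w$ weakly increases it. Also, if fewer than $k$ goods are allocated while some omitted bidder has $w_j\ge 0$, adding that bidder weakly increases the sum. The DSE allocation is the fixed point of these moves, so it attains the maximum welfare for every type profile $(e_i,v_i)_{i\in[n]}$.

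The main obstacle is the careful treatment of the boundary cases. Bidders with $w_i=0$ are indifferent between participating and abstaining, but either choice gives the same welfare. When fewer than $k+1$ bidders participate, the VCG price $b_{k+1}$ must be defined, and we take it to be $0$. We also need nonnegativity of VCG payments in the abstention argument; it holds because the price $b_{k+1}$ is a bid, and bids are nonnegative.
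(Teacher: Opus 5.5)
Your proposal is correct and follows the paper's route: invoke Theorem~\ref{thm:recipient-audit-DSE} on truthful VCG to obtain DSE bids $w_i=v_i+e_i$, then use that VCG allocates to the $k$ highest bids. Your additional checks fill in details the paper's short proof leaves implicit. These are: that abstention is dominant when $w_i<0$ (so only the top $k$ \emph{nonnegative} contributions are allocated), the explicit argument that this set maximizes $\sum_{i\in S}w_i$ over $|S|\le k$, and the convention $b_{k+1}=0$ when fewer than $k+1$ bidders participate.
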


\begin{proof}
Since VCG is DSIC (i.e. has a truthful DSE), Theorem~\ref{thm:recipient-audit-DSE} implies that DSE bids in $\mathcal{A}^*$ are
\[
b_i=v_i-R(e_i)=v_i+e_i=w_i,
\]
meaning bids are exactly equal to bidders' welfare contributions. Since the VCG allocation rule allocates to the $k$ highest bids, in DSE $\mathcal{A}^*$ allocates to the $k$ bidders with greatest welfare contribution. This maximizes welfare over all possible allocations.
\end{proof}

\subsection{Comparing Participant and Recipient Audits}

Recipient audits alter equilibrium bidding behavior itself rather than only precluding certain bidders from participating. In some applications recipient audits may be more realistic than participant audits if receiving a good is a more externally visible outcome. While in other applications participant audits may be easier to implement.

Recipient audits admit a particularly strong welfare-maximization result: VCG allocation and payment rules combined with the recipient penalty $R(e_i)=-e_i$ causes equilibrium bids to coincide exactly with individual welfare contributions, maximizing welfare including externalities. However, implementing this mechanism requires the auctioneer to impose expected penalties (and rewards) exactly equal to bidders' externalities. 

In many applications, exactly implementing an optimal penalty function may not be practically feasible. This motivates studying constrained variants of participant and recipient audits. We hypothesize there may be forms of penalty constraints such that participant audits achieve greater expected welfare despite their weaker guarantees when unconstrained. More broadly, participant and recipient audits represent two distinct approaches to incorporating externalities into auction design: participant audits control welfare through endogenous participation, while recipient audits directly modify effective bidder valuations.

Additionally, our theory for recipient and participant audits compose nicely, allowing them to be combined. Studying constrained combinations of participant and recipient audits appears to be a rich direction for future work. Analytical characterizations are likely to require reasoning about order statistics, suggesting computational approaches may ultimately prove more fruitful.

\section{Limitations and Dangers of Audits}\label{sec:dangers_of_audits}

Even though optimal unconstrained audits are welfare-improving in theory, constrained or imperfect audits may become welfare-perverse in practice. In particular, beneficial bidders with lower willingness to pay may be excluded before harmful bidders if audits are insufficiently accurate. 

To illustrate this phenomenon, consider two bidder types: a plus type with positive externality and a minus type with negative externality. The plus type may represent, for example, a smaller entrant firm advertising a high-quality product to a specialized market, or a speaker providing socially beneficial information to a minority community. The minus type may instead represent a large incumbent with stronger private incentives but negative social externalities. For concreteness, consider the numerical example
\[
(e^+,v^+)=(1,1),
\qquad
(e^-,v^-)=(-3,2).
\]
The minus type has larger willingness to pay, but negative net welfare contribution. Thus, in the unconstrained setting, it is always welfare-optimal to preclude the minus type from participating via participant and/or recipient audits.

Suppose the audit mechanism is constrained as follows. Each individual is audited independently with probability $q=\frac{1}{10}$, and audited bidders incur a penalty with probability $\ell^+=\frac{4}{10}$ or $\ell^-=\frac{5}{10}$ depending on whether they are plus or minus types, respectively. We interpret $\ell^+$ and $\ell^-$ as audit accuracies (e.g.\ true positive and true negative rates). The remaining design parameter is the penalty magnitude $\bar r$, yielding expected recipient penalties
\[
R(e^+)=(q\ell^+) \bar r,
\qquad
R(e^-)=(q\ell^-) \bar r.
\]

For simplicity, we consider recipient audits. A bidder of type $(e',v')$ is excluded whenever their penalty-adjusted bid becomes negative: $v' < q\ell' \bar r$.
Thus, excluding the harmful minus type requires $\bar r > \frac{v^-}{q\ell^-}$. However, sufficiently large penalties may also exclude the beneficial plus type. In particular, combining the exclusion conditions yields the following simple comparison of the ratio of valuations to the ratio of audit accuracies: 
\[
\frac{v^-}{v^+} > \frac{\ell^-}{\ell^+}.
\]
When this condition holds, it becomes impossible in this example to exclude the harmful minus type without also excluding the beneficial plus type. In our numerical example,
\[
\frac{v^-}{v^+}=2
\qquad\text{and}\qquad
\frac{\ell^-}{\ell^+}=\frac{5}{4},
\]
so the beneficial type is excluded first. Figure~\ref{fig:binary-constrained-audits} visualizes this.

\begin{figure}[ht]
    \centering
    \includegraphics[width=0.5\linewidth]{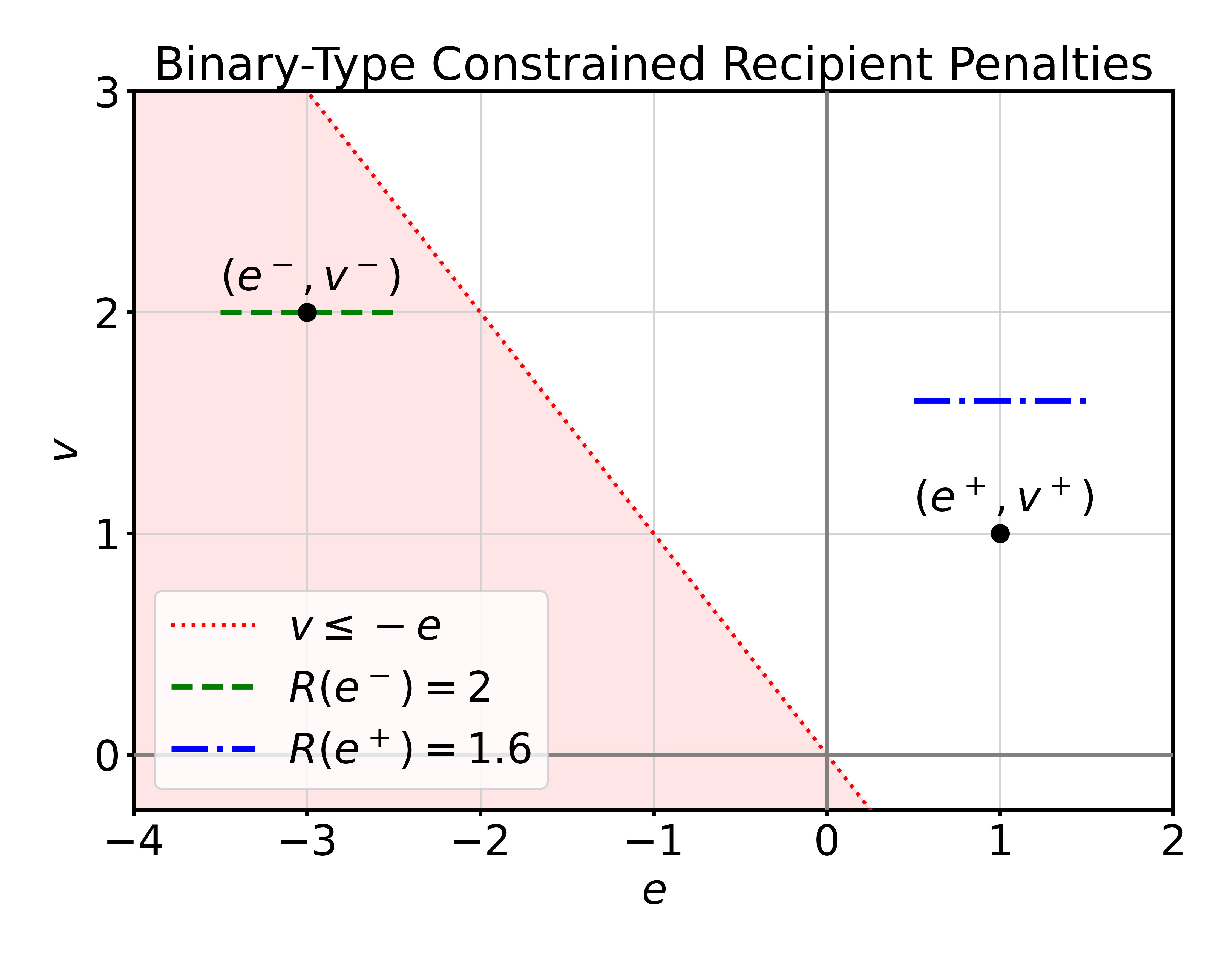}
    \caption{Constrained recipient audits can become welfare-perverse under imperfect audit accuracy. In this example, the beneficial plus type $(e^+,v^+)$ is excluded before the harmful minus type $(e^-,v^-)$ because the audit system insufficiently distinguishes between the two types.}
    \label{fig:binary-constrained-audits}
\end{figure}

This example illustrates a broader issue with constrained audits: excluding harmful bidders may also exclude beneficial bidders with lower willingness to pay. As audit accuracy decreases, this effect becomes more severe. In the extreme, audits could apply greater penalties to more beneficial types, becoming welfare-perverse by excluding socially beneficial bidders before harmful, or less beneficial, bidders.

Natural processes may behave like perverse or beneficial audits penalties. For example, speakers sharing information inconvenient to powerful institutions may face greater expected retaliation, functioning like a perverse recipient audit. Likewise, producing carefully researched information may require substantially greater effort than producing misleading or fabricated content, functioning like a perverse participant audit.

The numerical example in this section also shows that excessively large penalties may be required given low audit probabilities and imperfect audit accuracy. In our example the required penalty is twenty times larger than the harmful bidder's value, $\frac{\bar r}{v^-}=\frac{40}{2}=20$. The penalty magnitude required to exclude harmful bidders increases as either audit accuracy, and/or audit probability, decreases. Extremely large penalties may be infeasible in practice, and may disproportionately affect risk-averse or budget-constrained participants, creating additional fairness concerns.

Moreover, the chances of audits occurring may be constrained by the cost of performing audits in many applications. Consequently, practical audit systems may face a fundamental tradeoff between audit cost, audit accuracy, and welfare performance. We do not consider audit cost in our main theoretical results because we studied unconstrained audits. In principle, unconstrained audits with rational bidders could use infinitesimal audit probabilities together with arbitrarily large penalties, making expected audit cost negligible.

\section{Numerical Solutions and Empirical Welfare Impacts}
\label{sec:numerical-empirical}

\subsection{Methods}

To evaluate audit-based mechanisms empirically, we simulate auctions with $n$ agents competing for $k$ allocations. Each agent has a valuation $v_i$ and externality $e_i$. We compare three auction formats: a standard VCG auction ($vcg$), a participant-audited VCG auction ($vcgPA$) in which participation is governed by a threshold function $\tau(e)$, and a recipient-audit auction ($vcgRA$) in which allocation occurs in decreasing order of social welfare contributions, $w_i=v_i+e_i$. This is consistent with our preceding theory.

\paragraph{Auction Mechanisms in Simulation.}
For each simulated auction draw $j$ -- formed by sampling $n$ agents with $(v_i,e_i)$ pairs from the relevant population -- we evaluate all three mechanisms on the same draw. The \emph{VCG} mechanism allocates to the $k$ highest-value agents,
\[
S_j^{\text{vcg}} = \argmax_{S\subseteq[n],\,|S|=k} \sum_{i\in S} v_i,
\qquad
W_j^{\text{vcg}} = \sum_{i\in S_j^{\text{vcg}}} (v_i+e_i).
\]
The \emph{Participant Audit} mechanism ($vcgPA$) admits only agents whose declared value clears the threshold,
$\mathcal{M}_j(\bsc) = \{i\in[n] : v_i \geq \tau(e_i;\bsc)\}$,
where $\bsc$ denotes the vector of threshold-function parameters being optimized (made explicit below),
and allocates to the top $k$ admitted agents by value,
\[
S_j^{\text{vcgPA}}(\bsc) = \argmax_{S\subseteq \mathcal{M}_j(\bsc),\,|S|\le \min(k,\,|\mathcal{M}_j(\bsc)|)} \sum_{i\in S} v_i,
\qquad
W_j^{\text{vcgPA}}(\bsc) = \sum_{i\in S_j^{\text{vcgPA}}(\bsc)} (v_i+e_i),
\]
with payment to admitted agents set by a generalized $(k{+}1)$th-price rule: the $(k{+}1)$th-highest value among admitted agents if at least $k+1$ are admitted, and zero otherwise. The \emph{Recipient Audit} mechanism ($vcgRA$) applies no participation threshold and instead allocates to the $k$ agents with the largest positive social value $w_i=v_i+e_i$,
\[
S_j^{\text{vcgRA}} = \argmax_{S\subseteq\{i\,:\,w_i>0\},\,|S|\le k} \sum_{i\in S} w_i,
\qquad
W_j^{\text{vcgRA}} = \sum_{i\in S_j^{\text{vcgRA}}} w_i.
\]
For each experimental configuration we draw $J_{\text{train}}=500$ training auctions, used exclusively as the genetic algorithm's fitness signal during threshold optimization, and a disjoint set of $J_{\text{eval}}=2000$ evaluation auctions, held out from the optimizer and used only for final welfare reporting; all welfare figures reported below average $W_j^{\text{vcg}}$, $W_j^{\text{vcgPA}}(\bsc^*)$, or $W_j^{\text{vcgRA}}$ over the evaluation draws $j\in[J_{\text{eval}}]$ .

For participant-audit auctions, we numerically optimize threshold functions $\tau(e)$ to maximize expected welfare using a genetic algorithm. We parameterize $\tau(e)$ as a low-degree polynomial, $\tau_d(e)=\sum_{j=0}^d A_j e^j$, with coefficient vector $\bsc=(A_0,\dots,A_d)$ and degree $d$ swept over $\{1,2,3\}$; for each candidate $\bsc$, welfare is estimated by simulating auctions in which agents satisfying $v_i\geq\tau(e_i;\bsc)$ participate and the top-$k$ participating bidders by valuation are allocated. Full details of the genetic algorithm (selection, crossover, mutation, restart strategy, and hyperparameters) are provided in \Cref{appendix:ga-details}; the optimization code itself is available in the released supplemental materials (Appendix~\ref{appendix:git-repo}).

\subsection{Simulated Datasets}
\label{appendix:simulated-data-text}

We first evaluate the proposed mechanisms on synthetic datasets drawn from several joint distributions over valuations and externalities. In addition to the representative unimodal Gaussian setting described below, we also generate bimodal lateral, bimodal positively correlated, bimodal negatively correlated, and quadmodal joint distributions over $(e,v)$. For each distribution we draw a population of 10,000 agents and sample from that population to optimize polynomial participation thresholds $\tau(e)$ of degrees  $d\in\{1,2,3\}$ for each combination of bidder count $n\in\{8,16\}$ and allocation count $k\in\{1,2,4\}$. Robustness figures for these additional distributions are provided in Appendix~\ref{appendix:robustness-figures}.

For the representative unimodal setting, agents are sampled from
\[
e \sim N(0,0.1),
\qquad
v \sim N(2,0.1),
\]
with negative valuations removed.

Figure~\ref{fig:penalty_by_degree_sim} illustrates representative optimized threshold functions and their corresponding participant penalties, and optimal recipient penalties. The induced penalty curves are computed from the optimized thresholds as described in \cref{appendix:penalty-curves}. Consistent with the theoretical results of \cref{sec:participation-threshold-analysis,sec:recipient-audit}, the optimized thresholds selectively exclude low-welfare bidders while preserving most high-valuation allocations, resulting in net welfare increases.\footnote{The flattening of the participant penalty curve near zero is an artifact due to the nature of our $(k+1)$th price rule. There is zero chance of winning an item with valuations near and below zero. Thus the necessary penalty to dissuade participation approaches zero. The numerically optimized threshold and penalty curves are not meaningful in regions far away from the samples of the bidder population because there is no training signal in these regions.}

\begin{figure}[ht]
    \centering
    \includegraphics[width=\linewidth]{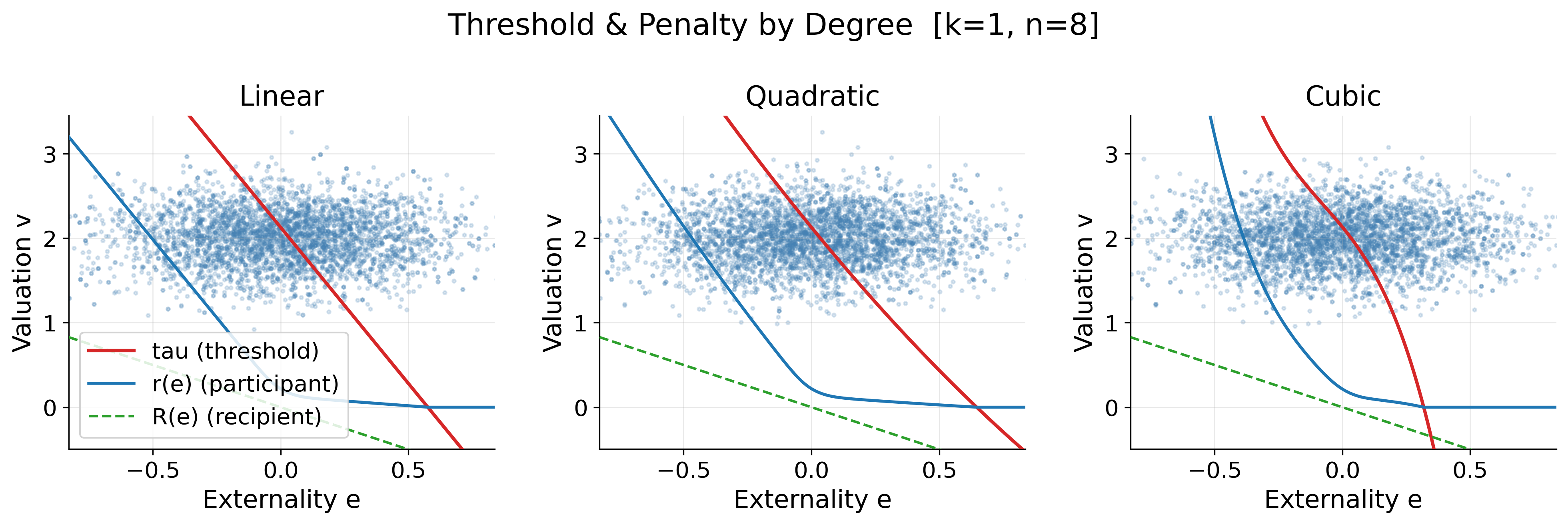}
    \caption{Representative optimized participant threshold functions together with their induced participant, and optimal recipient penalties, for the unimodal synthetic dataset ($k=1$, $n=8$). The optimized thresholds selectively exclude low-welfare bidders while preserving most high-valuation allocations. }
    \label{fig:penalty_by_degree_sim}
\end{figure}

Figure~\ref{fig:welfare_dist_sim} compares net welfare, valuation, and externality of allocated bidders across auction mechanisms. Both participant-audit and recipient-audit auctions shift welfare distributions substantially to the right while only modestly reducing bidder valuations. Recipient audits produce the largest welfare gains, consistent with the welfare-maximization results established theoretically in \ref{sec:recipient-audit}.

\begin{figure}[ht]
    \centering
    \includegraphics[width=\linewidth]{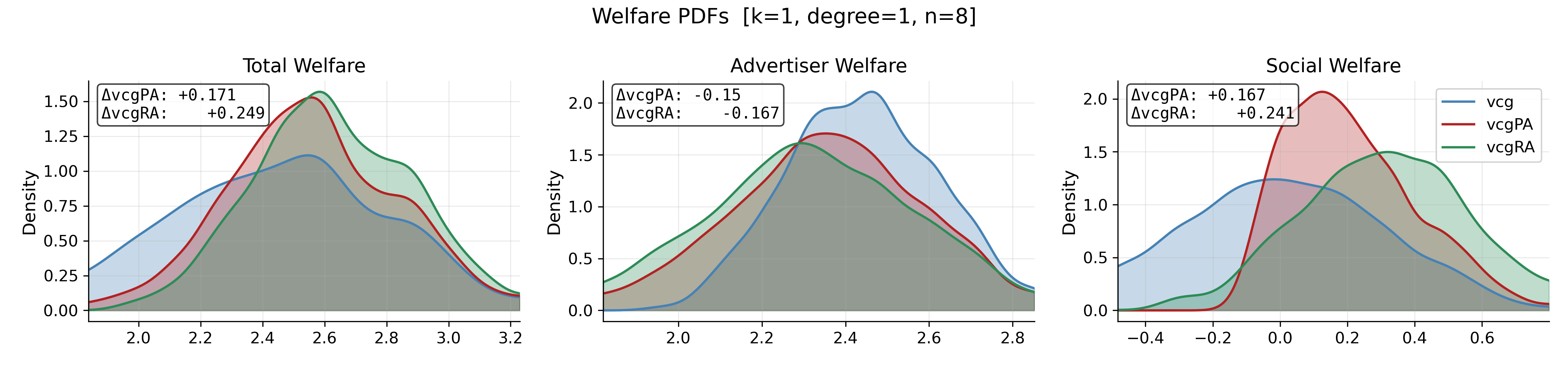}
    \caption{Distributions of welfare, valuations, and externalities under standard VCG, participant-audit, and recipient-audit mechanisms on the unimodal synthetic dataset. Audit-based mechanisms substantially improve welfare and externality outcomes while only modestly reducing valuations. $\Delta$vcgPA and $\Delta$vcgRA values are the integral differences between the VCG PDF and the participant audited and recipient audited PDFs respectively.}
    \label{fig:welfare_dist_sim}
\end{figure}

\subsection{Empirical Dataset Construction}
\label{sec:data-collection}

Modeling audited auctions in real applications requires an empirical joint distribution over bidder valuations and externalities. Because existing works with  rigorous joint estimates of valuation and externality have proprietary datasets \cite{shapiro_positive_2018, sinkinson_ask_2015, dubois_effects_2016, tuchman_advertising_2019}, we contribute a small open dataset and a rough method for estimating valuation and externality, consisting of approximately 11,800 X-Twitter posts annotated with Community Notes (which we refer to as the XNP400 dataset). Our method for estimating valuation and externality should only be considered to provide relative rankings of valuation and externality among posts, or perhaps a rough estimate of the shape of the joint distribution.

We construct the dataset using publicly available Community Notes data together with post-level metadata retrieved from the X API, as of February~2,~2025. Community Notes are labels which any User on X may choose to apply to a post, regarding how misleading the post is; other users may also rate the helpfulness of any Note. We filter the dataset to remove extreme data points. After filtering, the resulting dataset contains $N=11{,}853$ posts; a full description of data collection and filtering is given in \Cref{appendix:data-collection}.

For each post $i$, we estimate an externality $e_i = \zeta\cdot\beta_i/\mu_i$, where $\mu_i$ is the post's age in months, $\zeta$ is an externality-magnitude scaling parameter, and $\beta_i$ is an aggregate post rating constructed from Community Notes labels and helpfulness ratings: Notes labeled \texttt{ MISINFORMED\_OR\allowbreak\_POTENTIALLY\_MISLEADING} contribute negatively to $\beta_i$ and Notes labeled \texttt{NOT\_MISLEADING} contribute positively, with each Note weighted by the volume of ratings in support and opposition of the Note. The full scoring procedure is given in \Cref{appendix:externality-estimation}. We estimate valuations using post engagement as a proxy for advertiser value, $v_i = \theta\cdot\alpha_i/\mu_i$, where $\alpha_i$ is the total number of actions (likes, replies, reposts, clicks) on post $i$; we fix valuation magnitude $\theta=1$ throughout and instead vary $\zeta$, so that valuations and externalities are interpreted in units relative to each other. We believe $\theta=1$ is a realistic average magnitude. Additional discussion of the valuation proxy is given in \Cref{appendix:valuation-estimation}. These methods effectively assume linear relationships between ratings and externality, and between engagement and valuation. 

Because this model of externalities is crude, and the true magnitudes of externalities and valuations are uncertain, we simulate audited auctions across a range of $\zeta$ values. Descriptive statistics and correlation structure for the constructed dataset are reported in \Cref{appendix:descriptive-statistics}.

\FloatBarrier

\subsection{Empirical Welfare Results}

We optimize participant-audit thresholds across varying numbers of items allocated $k$, numbers of bidders $n$, and externality scales $\zeta$. 

Some prior studies provide empirical dollar-value estimates of average externality magnitudes from online advertising. Goldstein et al.~\cite{goldstein_economic_2014} estimate a CPM (cost per thousand impressions) of \$1.53 for spam impressions, while Schnadower Mustri et al.~\cite{schnadower_mustri_behavioral_2023} estimate a cost of \$0.54 per interaction with a targeted advertisement. We convert each into an externality-cost calibration $\zeta$ by dividing by the sample mean of aggregate rating per impression or per action, respectively,
\[
\zeta^{\text{imp}} = \frac{1.53/1000}{\overline{\beta_i/\eta_i}} \approx \$2.71,
\qquad\qquad
\zeta^{\text{act}} = \frac{0.54}{\overline{\beta_i/\alpha_i}} \approx \$2.54,
\]
in dollars per rating unit per month, where $\eta_i$ denotes post $i$'s lifetime impressions (see \cref{appendix:zeta-calibration} for details). The two estimates agree closely, yielding a baseline calibration of $\zeta\approx\$2.5$. Because this magnitude is approximate, we consider a range of externality magnitudes; our representative grid value $\zeta=2.8$ used in the sweep below sits close to both calibration estimates. 

We search for optimal participant-audit thresholds among polynomials of degrees $1$, $2$, and $3$, for all combinations of parameters in the following sets.
\[
\zeta\in\{1,2.8,4.6,6.4,8.2,10\},
\qquad
k\in\{1,2,4,8\},
\qquad
n\in\{8,16\}.
\]

Figure~\ref{fig:tau_degree_comparison_emp_appendix} compares optimized threshold functions for varying values of $\zeta$ in the representative setting $k=1,n=8$.

\begin{figure}[ht]
    \centering
    \includegraphics[width=0.8\linewidth]{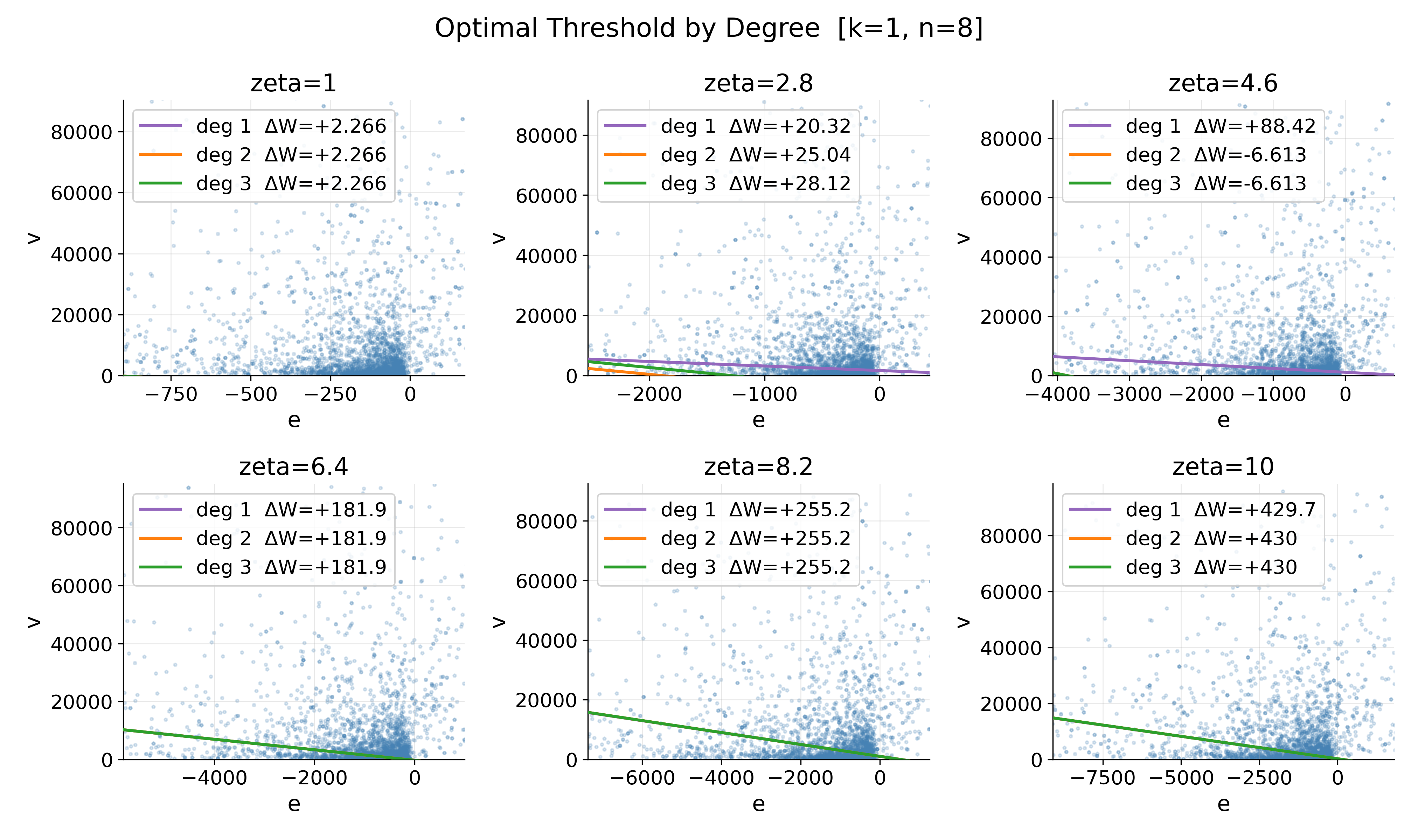}
    \caption{Optimized participant threshold functions for varying threshold polynomial degrees and externality scales $\zeta$. For small values of $\zeta$, optimized thresholds primarily exclude extreme negative-externality outliers while leaving most of the distribution unaffected. As $\zeta$ increases, the optimized thresholds become substantially more selective. The genetic optimization exhibits more instability and sub-optimality for smaller values of $\zeta$.}
    \label{fig:tau_degree_comparison_emp_appendix}
\end{figure}

\subsubsection{Estimation of the Scale of Social Impact}
\label{sec:welfare-comparisons}

Figure~\ref{fig:welfare_by_zeta_emp} compares expected welfare under standard VCG auctions, best-found participant-audit auctions, and optimal recipient-audit auctions. Across nearly all tested parameter settings, both audit-based mechanisms improve welfare relative to standard VCG allocation, with recipient audits producing the largest gains as expected. As externality magnitude $\zeta$ increases, the relative welfare gains from both audit mechanisms become substantially larger, with an apparently non-linear trend 

\begin{figure}[ht]
    \centering
    \includegraphics[width=\linewidth]{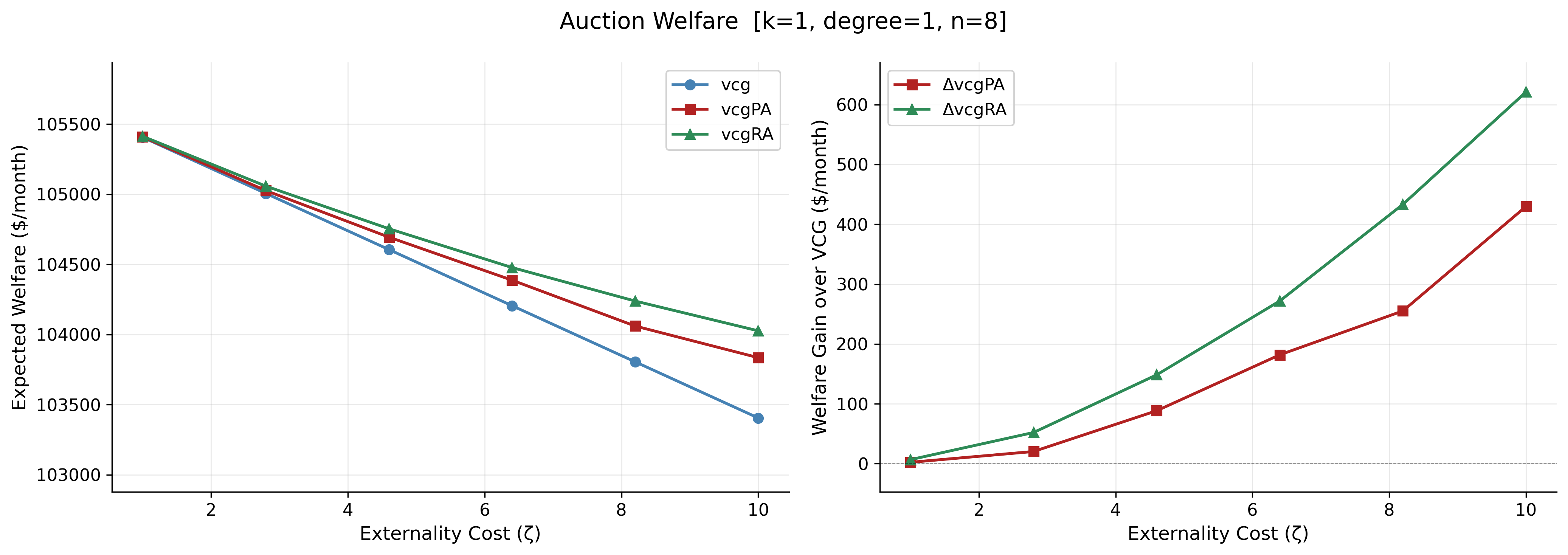}
    \caption{Expected welfare from standard VCG, participant-audit, and recipient-audit mechanisms for varying externality scales $\zeta$ on the X Post dataset. Expected welfare is calculated as the average valuation + externality of the winners of the evaluation set of simulated auctions. The welfare values represent the generated welfare per month by an ad that has won an auction. Absolute welfares decrease with $\zeta$ because negative externalities receive larger weight in welfare calculations, reducing baseline welfare from standard VCG allocation. Participant audit welfare gain (i.e. the difference between participant audited auction welfare and VCG auction welfare) at externality cost per rating $\zeta=2.8$ (0.59 \$/action): \$20; at externality cost per action = $\zeta=10$ (2.13 \$/action): \$430. Recipient audit welfare gain at the same $\zeta$ values are \$52 and \$621 respectively.}
    \label{fig:welfare_by_zeta_emp}
\end{figure}

As a rough illustrative exercise, we combine these welfare estimates with estimates of advertising activity on large online advertising platforms. A thorough previous study found there were at least 267{,}000 political advertisements published on Facebook over three months in 2018 \cite{laura_edelson_nyus_nodate, new_york_university_tandon_school_of_engineering_now_2018, edelson_analysis_2019, edelson_security_2020}---roughly $89{,}000$ ads per month. Naively scaling our simulated per-auction welfare improvements (see caption of Figure \ref{fig:welfare_by_zeta_emp}) by this volume gives total welfare gains across a single platform, as shown in table \ref{tab:welfare_gain_estimates}.

\begin{table}[htbp]
\centering
\caption{Estimated Platform-total Welfare Gains Per Month}
\label{tab:welfare_gain_estimates}
\small
\begin{tabular}{lll}
\toprule
&   $\zeta\approx2.8$&$\zeta\approx10$\\
\midrule
$\Delta W^{\text{vcgPA}}$& $89{,}000\times\$20\approx\$1.8\text{M per month}$&$89{,}000\times\$430\approx\$38\text{M per month}$\\
$\Delta W^{\text{vcgRA}}$& $89{,}000\times\$52\approx\$4.6\text{M per month}$&$89{,}000\times\$621\approx\$55\text{M per month}$\\
\bottomrule
\end{tabular}
\end{table}

We use $\zeta\approx 2.8$ as the baseline because it closely matches the two empirical estimations of user externality mentioned above. However, the externality magnitude $\zeta\approx 10$ could be interpreted as assuming a platform user's externality has the baseline magnitude of $\zeta\approx2.8$ while the social externality (broader members of society) is approximately twice as large as the user's, resulting in a sum total approximately triple the baseline. Thus, even relatively small improvements at the auction level could aggregate into substantial system-level welfare benefit, which we estimate to be on the order of \$1--10 million per month just from political advertisement on a single major social media platform in 2018. However, this estimate should be interpreted only as an illustrative order of magnitude, rather than a quantitative forecast. 

More broadly, the empirical results should be interpreted primarily as evidence that audit-based mechanisms can meaningfully alter welfare outcomes under plausible externality structures, rather than as precise estimates of platform-wide welfare effects. The constructed empirical dataset is not fully representative, and likely differs from true distributions of promoted advertisements on media platforms. The primary purpose of the empirical analysis is therefore to illustrate the qualitative behavior of audit-based mechanisms and to demonstrate that welfare improvements remain plausible under noisy and imperfect real-world proxy signals.

\subsection{Discussion of Empirical Results}

The empirical results suggest that audit-based auction mechanisms can substantially improve welfare relative to standard VCG allocation under a range of externality scales and auction configurations. 

Across most parameter combinations tested, affine threshold functions perform comparably to higher-order polynomial thresholds. In several cases, quadratic and cubic thresholds converge closely to affine solutions, suggesting that simple linear participation rules may capture much of the achievable welfare improvement. This also suggests analytical approximation results may be tractable despite the order statistics involved.

Several important limitations remain. First, the empirical distribution depends heavily on the externality scaling parameter $\zeta$, which is inherently difficult to estimate and likely varies substantially across types of content and contexts. Second, the dataset itself provides only a rough proxy for externalities and valuations. Community Notes primarily capture misinformation-related harms and do not measure many other forms of externality. Moreover, because promoted posts cannot be reliably identified through the X API, many sampled posts likely do not correspond to actual advertisements.

The empirical optimization procedure also has important limitations. Higher-order threshold functions occasionally exhibited unstable out-of-sample performance, possibly due to overfitting to the discrete auction samples during optimization. 

\section{Conclusion}\label{sec:conclusion}

This paper introduced auctions with audits, a mechanism-design framework in which bidders face externality-dependent penalties that induce endogenous participation constraints, and/or change effective bidder valuations. We showed that these mechanisms can be implemented using standard truthful auctions, such as VCG, augmented with personalized and non-manipulable participation penalties. This induces participation thresholds as a function of bidder externalities and allows auction participation itself to become a mechanism for screening socially harmful allocations. Our results can also be used to understand the behavioral effects of other natural, or non-natural, exogenous processes which function like audits and penalties. 

We characterized welfare-optimal participation thresholds in several analytically tractable settings. In particular, we showed that affine thresholds are welfare-optimal in the absence of competition and in binary-type settings with a single good. More broadly, our numerical experiments suggest that affine thresholds and/or penalties may remain close to optimal in substantially richer environments as well. 

Our primary motivation comes from large-scale attention markets such as digital advertising and social media promotion systems, where substantial externalities remain unpriced despite growing societal concern regarding misinformation, manipulation, and harmful content amplification. Audit-based mechanisms provide an alternative to centralized welfare estimation by allowing externality-sensitive participation constraints to emerge through decentralized auditing and penalty systems. More generally, we believe related mechanisms may have applications in domains involving information elicitation, reputation systems, tax auditing, blockchain oracles, journalism, online reviews, and government contracting.

To explore the empirical implications of audit-based auctions, we constructed a dataset of approximately twelve thousand X-Twitter posts associated with Community Notes and used this dataset to estimate approximate joint distributions over advertiser valuations and externalities. Simulations suggest that audit-based mechanisms can substantially improve welfare relative to standard VCG allocation while only modestly reducing bidder valuations. In simulations recipient-audit mechanisms achieved the largest welfare gains by construction, while participant-audit mechanisms produced substantial improvements using relatively simple affine threshold structures.

Several important directions for future work remain. On the theoretical side, an important open problem is characterizing welfare-optimal thresholds and penalties under general type distributions and richer competitive environments. Another natural extension is to study settings with explicit limits on auditing capacity, imperfect or noisy audits, or restrictions on admissible penalty functions. There is also opportunity to extend the theory to more general auction and mechanism design settings, such as position auctions. 

On the empirical side, more accurate measurements of the joint distribution of information externalities and advertiser valuations remain an important challenge. Improved datasets, richer models of user welfare, and endogenous behavioral responses by platforms and advertisers could substantially improve future empirical evaluations. Different numerical optimization methods could be explored to improve stability or attain optimality guarantees.

Overall, our results suggest that audit-based mechanisms provide a promising approach for incorporating externalities into auction design, particularly in large-scale information and attention markets where centralized welfare estimation may be infeasible or undesirable.

\section*{Acknowledgements}

The authors declare no conflict of interest and have received no financial support for this work. A peer reviewed National Science Foundation grant SaTC \#2217770 was awarded to Marshall Van Alstyne, but it was rescinded without explanation by the Department of Government Efficiency (DOGE) before it could support this work.  

\bibliographystyle{ACM-Reference-Format}
\bibliography{references}

\appendix \label{appendix_start}

\section{Optimization and Simulation Details}
\label{appendix:optimization}

This appendix gives the full details of the auction simulation, the genetic algorithm used to optimize participation thresholds, and the expected penalty curves induced by the optimized thresholds. The three auction mechanisms ($vcg$, $vcgPA$, $vcgRA$) and the welfare quantities they produce are defined in the main text (\cref{sec:numerical-empirical}); here we focus on the components specific to the optimization procedure.

\subsection{Auction Draws and Fair Comparison}
\label{appendix:auction-draws}

For each experimental cell $(k,\zeta,n,D)$ -- allocation slots, externality scale, bidders per draw, and threshold polynomial degree -- we maintain two independent sets of auction draws with distinct purposes. The \emph{training draws} ($J_{\text{train}}=500$) are used exclusively by the genetic algorithm as its fitness signal during optimization. The \emph{evaluation draws} ($J_{\text{eval}}=2000$) are held out from the optimizer entirely and used only for final welfare reporting; the larger draw count reduces Monte Carlo variance in the reported welfare estimates. Each draw $j$ is formed by sampling $n$ agents without replacement, where agent $i$ enters with its pair $(v_i,e_i)$ (drawn from the relevant synthetic distribution or from the empirical dataset).

To ensure that observed welfare differences reflect the parameters of interest rather than sampling noise, the random seeds controlling the draws are made functions of $(\zeta,n)$ only, excluding $k$ and $D$. Consequently, all degrees and slot counts within the same $(\zeta,n)$ cell are trained and evaluated on identical agent samples, so any difference in optimized welfare across degrees or $k$ values is attributable to those parameters and not to sampling variation. The evaluation seed is offset from the training seed by a fixed constant, guaranteeing independence between the two sets.

\subsection{Threshold Parameterization}
\label{appendix:threshold-param}

The participant-audit mechanism uses a polynomial participation threshold of degree $D$,
\begin{equation}
\tau(e;\bsc)=\sum_{m=0}^{D} A_m\, e^m,
\qquad
\bsc=(A_0,A_1,\ldots,A_D)\in\mathbb{R}^{D+1},
\label{eq:tau-poly-appendix}
\end{equation}
where the coefficient vector $\bsc$ is the decision variable optimized by the genetic algorithm. We sweep $D\in\{1,2,3\}$, recovering affine, quadratic, and cubic thresholds.

\subsection{Genetic Algorithm Details}
\label{appendix:ga-details}

The optimization goal is to find
\begin{equation}
\bsc^{*}=\argmax_{\bsc\in\mathcal{C}}\;\bar{W}^{\text{vcgPA}}(\bsc),
\end{equation}
where $\bar{W}^{\text{vcgPA}}(\bsc)$ is the mean participant-audit welfare over the training draws and $\mathcal{C}$ is the gene space defined below. Because the objective is non-convex in $\bsc$ and the feasible space is continuous and high-dimensional, we use a genetic algorithm (GA) operating directly in the original $(v,e)$ space without normalization.

\paragraph{Gene space.}
The gene space $\mathcal{C}=\mathcal{C}_0\times\mathcal{C}_1\times\cdots\times\mathcal{C}_D$ is derived from the empirical distribution of $(v_i,e_i)$ values so that the search range scales correctly regardless of the externality scale $\zeta$ or the polynomial degree $D$. Let $\sigma_v$ denote the standard deviation of agent values across all draws, and let $\overline{|e|^{m}}=\mathbb{E}[|e_i|^{m}]$ denote the mean absolute $m$th power of externalities. The intercept range ensures the threshold can start above or below all bids,
\begin{equation}
\mathcal{C}_0=\Bigl[v_{\min}-\tfrac{1}{2}\sigma_v,\;\; v_{\max}+\tfrac{1}{2}\sigma_v\Bigr],
\label{eq:gene0}
\end{equation}
while for each slope coefficient ($m\geq 1$) the range is calibrated so that at a typical externality the contribution $|A_m e^m|$ can shift $\tau$ by up to $\lambda\sigma_v$, covering the full spread of agent values,
\begin{equation}
\mathcal{C}_m=\Bigl[-\lambda\sigma_v/\overline{|e|^{m}},\;\; \lambda\sigma_v/\overline{|e|^{m}}\Bigr],
\qquad m=1,\ldots,D,
\label{eq:gened}
\end{equation}
where $\lambda=3$ is the gene range multiplier (\cref{tab:ga_hyperparams}).

\paragraph{Vectorized fitness evaluation.}
All $N_{\text{pop}}$ candidate solutions in the current population are evaluated simultaneously using a single matrix multiply per auction draw. Let $\mathbf{P}\in\mathbb{R}^{N_{\text{pop}}\times(D+1)}$ be the population matrix (row $g$ is candidate $\bsc^{(g)}$), and let $\mathbf{E}^{(j)}\in\mathbb{R}^{n\times(D+1)}$ be the Vandermonde-style power matrix for draw $j$ with entries $E^{(j)}_{im}=e_i^{\,m}$. Then
\begin{equation}
\mathbf{T}^{(j)}=\mathbf{P}\bigl(\mathbf{E}^{(j)}\bigr)^{\!\top}\in\mathbb{R}^{N_{\text{pop}}\times n},
\qquad
T^{(j)}_{g,i}=\tau(e_i;\bsc^{(g)}),
\end{equation}
evaluates the threshold for every population member and every agent at once. The admission mask is $A^{(j)}_{g,i}=\mathbf{1}[v_i\geq T^{(j)}_{g,i}]$; agents are sorted by $v$ in descending order, and the cumulative sum of admitted indicators identifies the first $k$ admitted positions, yielding the winner set. Welfare is accumulated by summing $v_i+e_i$ over winners for all $g$ simultaneously. The fitness of candidate $g$ is its mean participant-audit welfare across the $J_{\text{train}}=500$ training draws; the evaluation draws are never seen by the optimizer.

\paragraph{Selection, crossover, and mutation.}
The GA uses tournament selection (for each parent slot, a random subset of the population is drawn without replacement and the highest-fitness candidate wins), single-point crossover (a random cut point in $\{1,\ldots,D\}$ is chosen for each pair of parents, and the offspring inherits coefficients from one parent before the cut and the other after), and Gaussian-perturbation mutation with per-gene step sizes proportional to each gene's range,
\begin{equation}
\sigma_m^{\text{mut}}=\sigma_{\text{frac}}\cdot|\mathcal{C}_m|,
\end{equation}
where $|\mathcal{C}_m|$ is the width of gene range $\mathcal{C}_m$. Each gene is independently perturbed with probability $p_{\text{mut}}$; to prevent stagnation, at least one gene per offspring is always mutated, and all perturbed values are clipped to the gene bounds.

\paragraph{Multi-restart and warm start.}
To increase coverage of the coefficient space, each cell is run for $N_{\text{restart}}$ independent restarts with different random seeds, and the highest-fitness solution is retained. When sweeping degrees $D=1,2,3$ in ascending order, the optimal solution from degree $D$ seeds the first population member at degree $D+1$,
\begin{equation}
\bsc_{\text{seed}}^{(D+1)}=\bigl(A_0^{*},A_1^{*},\ldots,A_D^{*},\,0\bigr),
\label{eq:warm-start}
\end{equation}
so that a higher-degree search can never perform worse than the best lower-degree solution it nests. The warm start is applied only to restart~0; remaining restarts use fresh random populations, preserving the benefit of the warm start while maintaining diversity.

\Cref{tab:ga_hyperparams} reports the default hyperparameters used across experiments. Full implementation details and the optimization code itself are available in the released supplemental materials (Appendix~\ref{appendix:git-repo}).

\begin{table}[H]
\centering
\caption{Genetic algorithm hyperparameters (defaults used unless noted otherwise).}
\label{tab:ga_hyperparams}
\begin{tabularx}{\textwidth}{llX}
\toprule
Symbol & Default & Description \\
\midrule
$D$ & 1, 2, 3 & Polynomial degree of $\tau$ (swept) \\
$N_{\text{pop}}$ & 50 & Population size \\
$T$ & 500 & GA generations per restart \\
$N_{\text{restart}}$ & 1 & Independent restarts per cell \\
$N_{\text{mate}}$ & 15 & Parents selected for crossover each generation \\
$p_{\text{mut}}$ & 0.25 & Per-gene mutation probability \\
$\sigma_{\text{frac}}$ & 0.15 & Gaussian mutation step as fraction of gene range \\
$\lambda$ & 3.0 & Gene range multiplier (Eq.~\ref{eq:gened}) \\
$J_{\text{train}}$ & 500  & Training auction draws per cell (GA fitness) \\
$J_{\text{eval}}$  & 2000 & Evaluation draws per cell (welfare reporting only) \\
$n$ & 8, 16 & Bidders sampled per auction draw (swept) \\
\bottomrule
\end{tabularx}
\end{table}

\FloatBarrier

\subsection{Expected Penalty Curves}
\label{appendix:penalty-curves}

\Cref{sec:participation-threshold-analysis} shows that, in the i.i.d.\ setting, a participation threshold $\tau$ and a marginal value distribution $F$ together determine the participant penalty function $r$ via
$r(e)=\mathbb{E}[x_i(\tau(e),v_{-i})]\,\tau(e)-\mathbb{E}[p_i(\tau(e),v_{-i})]$.
Given the optimized threshold $\tau(e;\bsc^{*})$, we estimate this penalty empirically from the evaluation draws, producing the participant and recipient penalty curves plotted in the threshold-and-penalty figures.

\paragraph{Participant audit.}
Let $\{p_j\}_{j=1}^{J_{\text{eval}}}$ be the sequence of participant-audit auction payments (the $(k{+}1)$th-price payments among admitted agents defined in \cref{sec:numerical-empirical}), for each evaluation draw. For an agent with externality $e$, define
\begin{align}
A(e)&=\frac{1}{J_{\text{eval}}}\bigl|\{\,j:p_j<\tau(e;\bsc^{*})\,\}\bigr|,\\
P(e)&=\mathbb{E}\!\left[p_j\;\middle|\;p_j<\tau(e;\bsc^{*})\right],
\end{align}
so that $A(e)$ is the fraction of auction draws in which the market price falls below a threshold value $\tau(e)$ (i.e.\ the probability that an agent whose value sits exactly at the threshold $\tau(e)$ is allocated), and $P(e)$ is the mean of the prices below that threshold value. To induce the threshold $\tau(e)$, the participant penalty must equal the expected allocation utility of a marginal agent at type $(e,\tau(e))$, giving the empirical penalty
\begin{equation}
r^{\text{PA}}(e)=A(e)\,\bigl[\tau(e;\bsc^{*})-P(e)\bigr].
\label{eq:r-pa}
\end{equation}
For large negative externalities, $\tau(e;\bsc^{*})$ is high relative to the market price, so both $A(e)$ and the expected gap are large and the penalty is steep; for small or positive externalities, $\tau(e;\bsc^{*})$ is low and most market prices clear it, so $A(e)\approx 0$ and $r^{\text{PA}}(e)\approx 0$.

\paragraph{Recipient audit.}
For the recipient audit, selection is based directly on social value $w_i=v_i+e_i$, since the welfare-maximizing recipient penalty from \cref{thm:welfare-maximal-recipient-audit} fully internalizes externality,
\begin{equation}
R^{\text{RA}}(e)=-e.
\label{eq:r-ra}
\end{equation}
This is linear in $e$: agents generating negative externalities ($e<0$) pay a penalty equal to $|e|$, while those generating positive externalities ($e>0$) effectively receive a subsidy equal to $|e|$.

\FloatBarrier

\section{Additional Robustness Figures for Simulated Data}
\label{appendix:robustness-figures}

This appendix reports additional robustness analyses and sensitivity experiments on the synthetic datasets described in \cref{appendix:simulated-data-text}, supplementing the representative results presented in the main text. Even more results can be found through the online code repository linked in Appendix \ref{appendix:git-repo}.

\subsection{Threshold Sensitivity to Competition}
\label{appendix:optimization-figures}

Figure~\ref{fig:tau_by_n_sim_appendix} shows how optimized participation thresholds vary with the number of auction participants $n$. As competition increases, the mechanism becomes more selective and participation thresholds shift toward excluding larger portions of low-welfare bidders.

\begin{figure}[ht]
    \centering
    \includegraphics[width=0.8\linewidth]{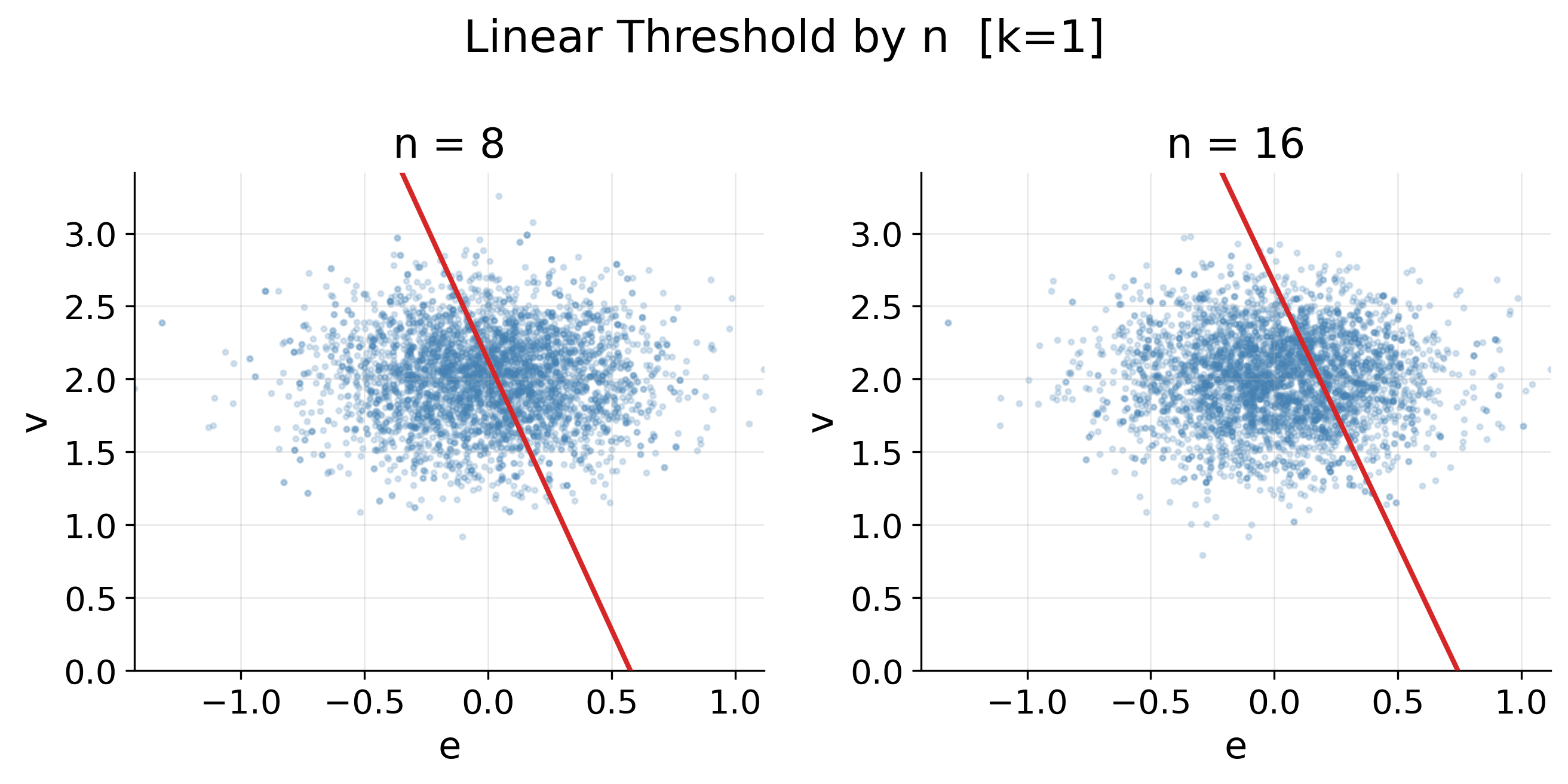}
    \caption{Optimized participant threshold functions for varying numbers of auction participants.}
    \label{fig:tau_by_n_sim_appendix}
\end{figure}

Figure~\ref{fig:tau_by_k_sim_appendix} shows how optimized thresholds vary with the number of allocated items $k$. Increasing the number of allocated items relaxes participation constraints because the mechanism can simultaneously accommodate bidders with differing valuation--externality tradeoffs.

\begin{figure}[ht]
    \centering
    \includegraphics[width=0.8\linewidth]{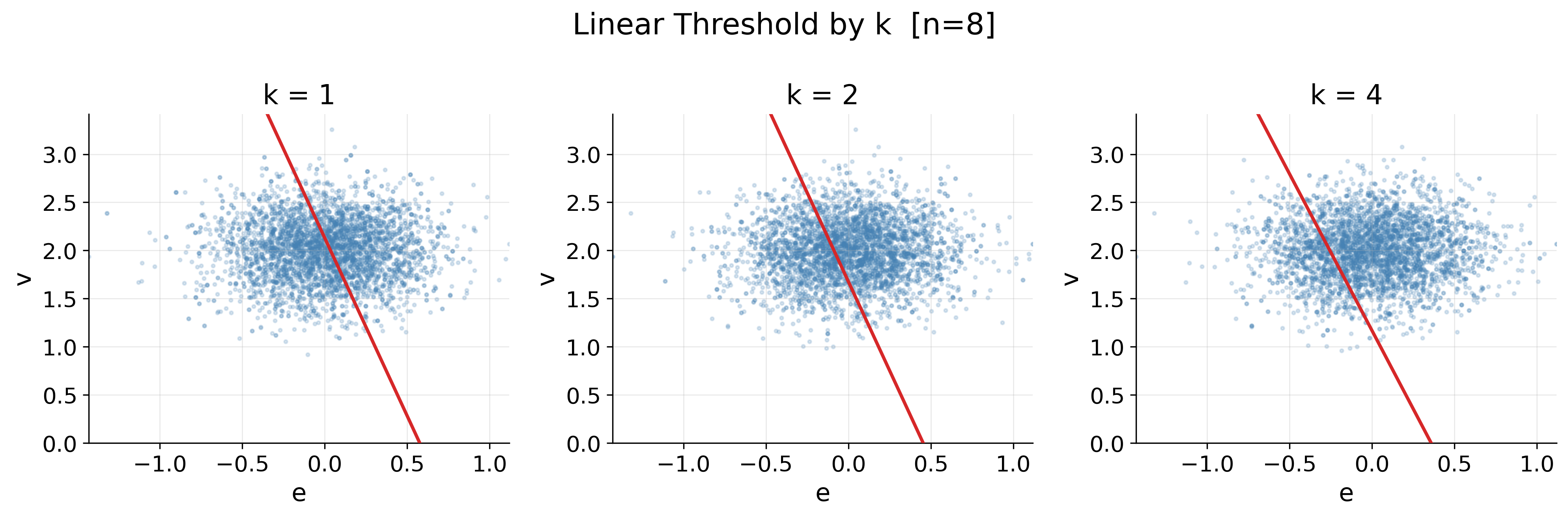}
    \caption{Optimized participant threshold functions for varying numbers of allocated items.}
    \label{fig:tau_by_k_sim_appendix}
\end{figure}

Figure~\ref{fig:tau_by_degree_sim_appendix} visualizes threshold functions explored during the genetic optimization procedure. The highlighted curve corresponds to the best-performing threshold function identified by the optimization algorithm.

\begin{figure}[ht]
    \centering
    \includegraphics[width=0.8\linewidth]{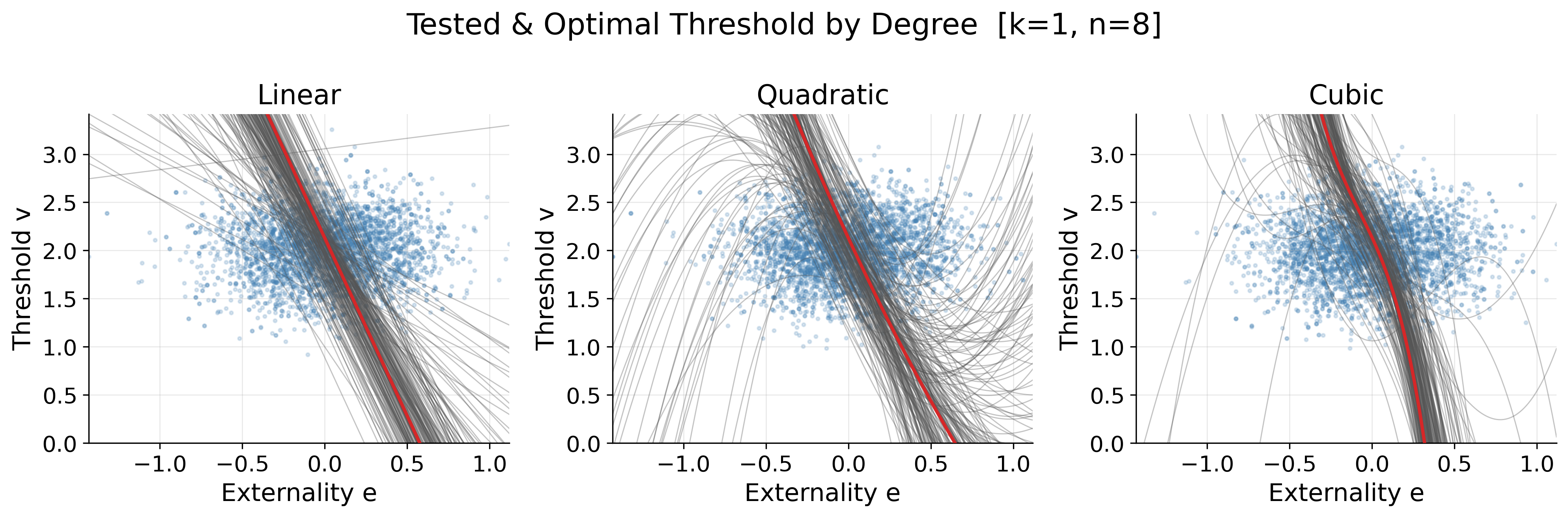}
    \caption{Threshold functions explored during genetic optimization, with the best threshold found highlighted.}
    \label{fig:tau_by_degree_sim_appendix}
\end{figure}

\subsection{Threshold and Penalty Functions Across Bimodal Distributions}

\begin{figure}
    \centering
    \includegraphics[width=\linewidth]{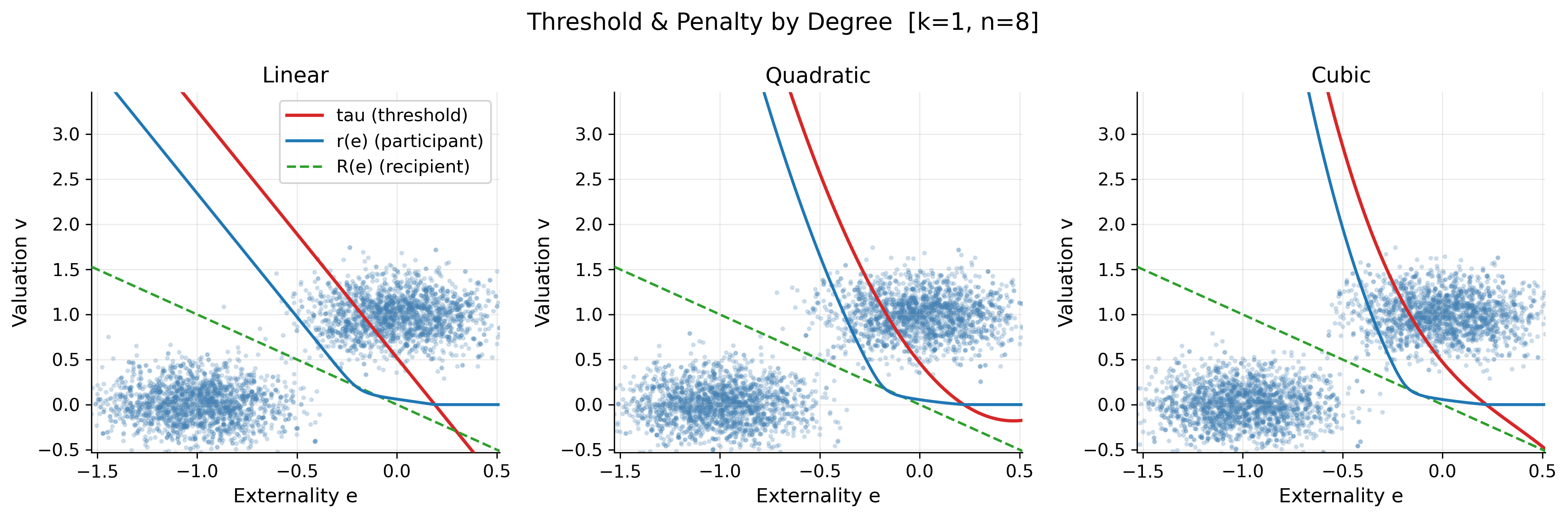}
    \caption{Threshold and penalty functions for diagonal bi-modal normally distributed synthetic data with $k=1$ and $n=8$.}
    \label{fig:c_penalty_by_degree_k1_n8}
\end{figure}

\begin{figure}
    \centering
    \includegraphics[width=\linewidth]{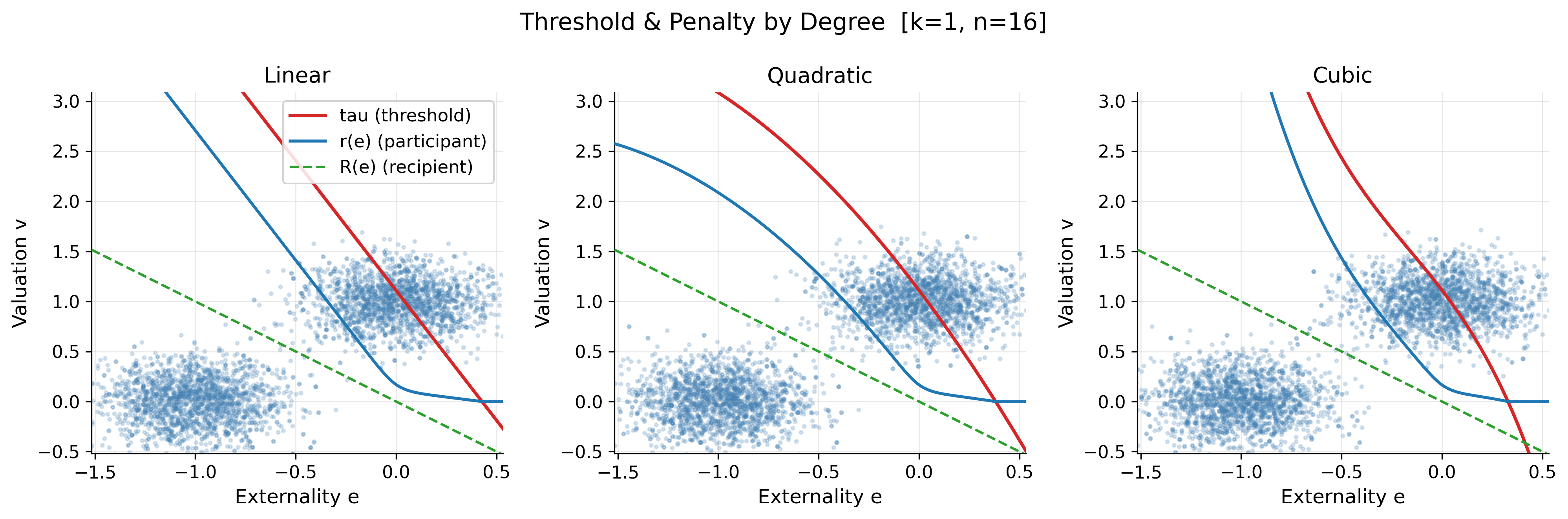}
    \caption{Threshold and penalty functions for diagonal bi-modal normally distributed synthetic data with $k=1$ and $n=16$.}
    \label{fig:c_penalty_by_degree_k1_n16}
\end{figure}

\begin{figure}
    \centering
    \includegraphics[width=\linewidth]{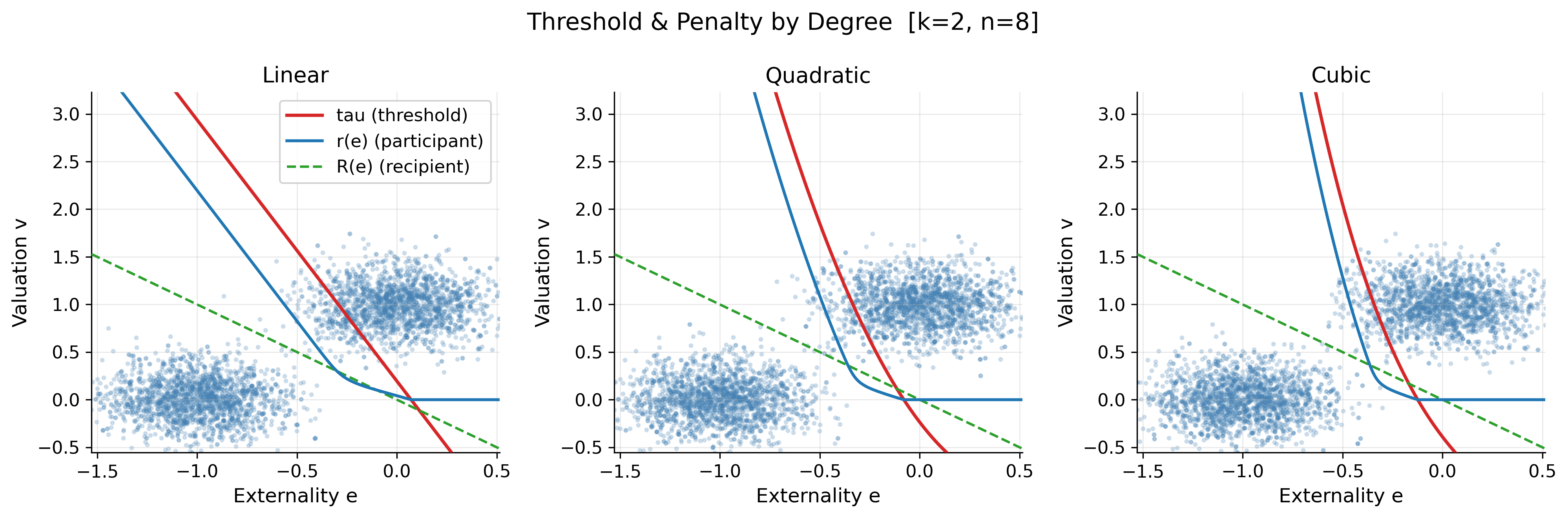}
    \caption{Threshold and penalty functions for diagonal bi-modal normally distributed synthetic data with $k=2$ and $n=8$.}
    \label{fig:c_penalty_by_degree_k2_n8}
\end{figure}

\begin{figure}
    \centering
    \includegraphics[width=\linewidth]{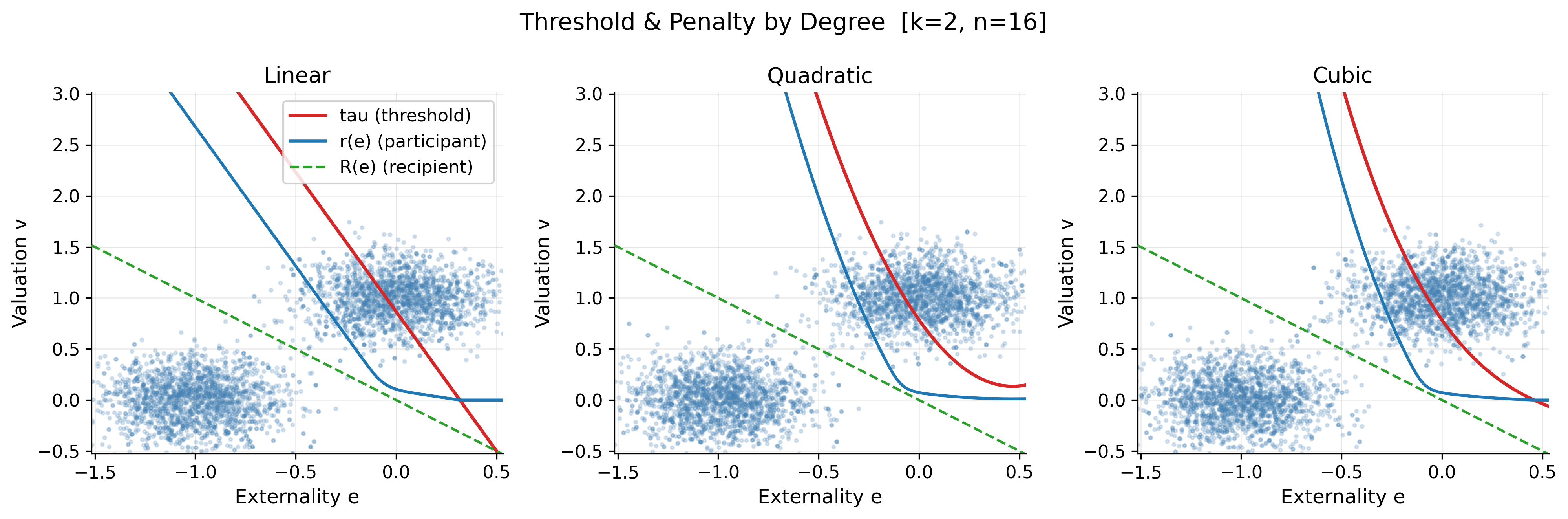}
    \caption{Threshold and penalty functions for diagonal bi-modal normally distributed synthetic data with $k=2$ and $n=16$.}
    \label{fig:c_penalty_by_degree_k2_n16}
\end{figure}

\begin{figure}
    \centering
    \includegraphics[width=\linewidth]{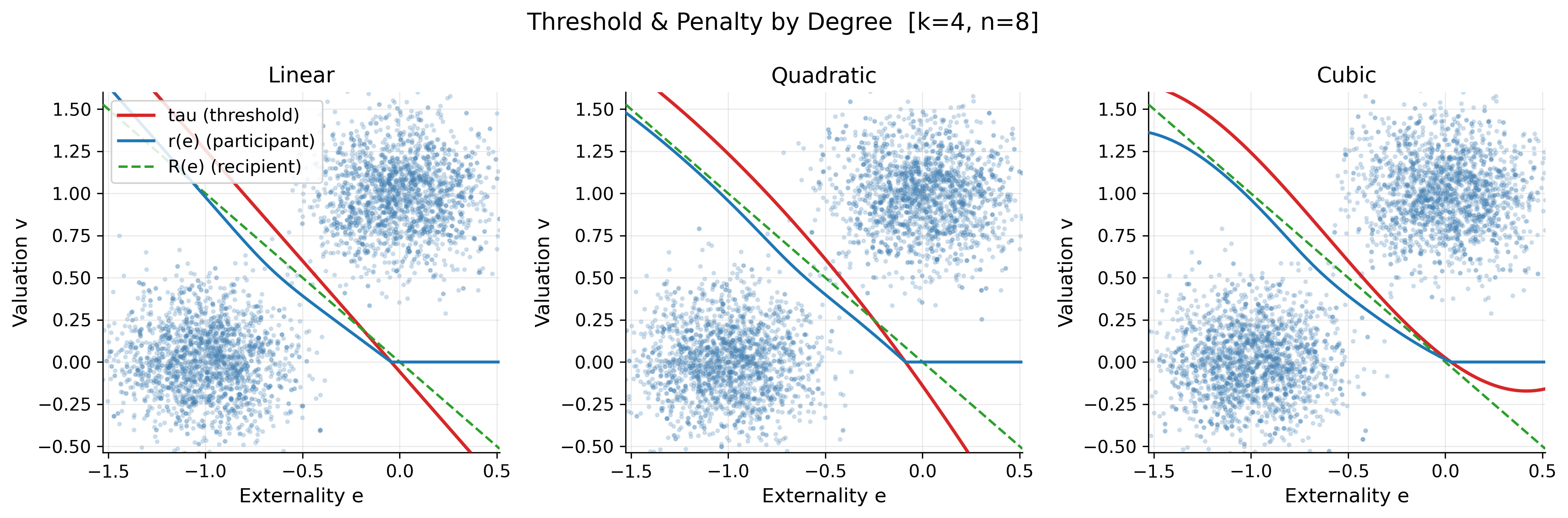}
    \caption{Threshold and penalty functions for diagonal bi-modal normally distributed synthetic data with $k=4$ and $n=8$.}
    \label{fig:c_penalty_by_degree_k4_n8}
\end{figure}

\begin{figure}
    \centering
    \includegraphics[width=\linewidth]{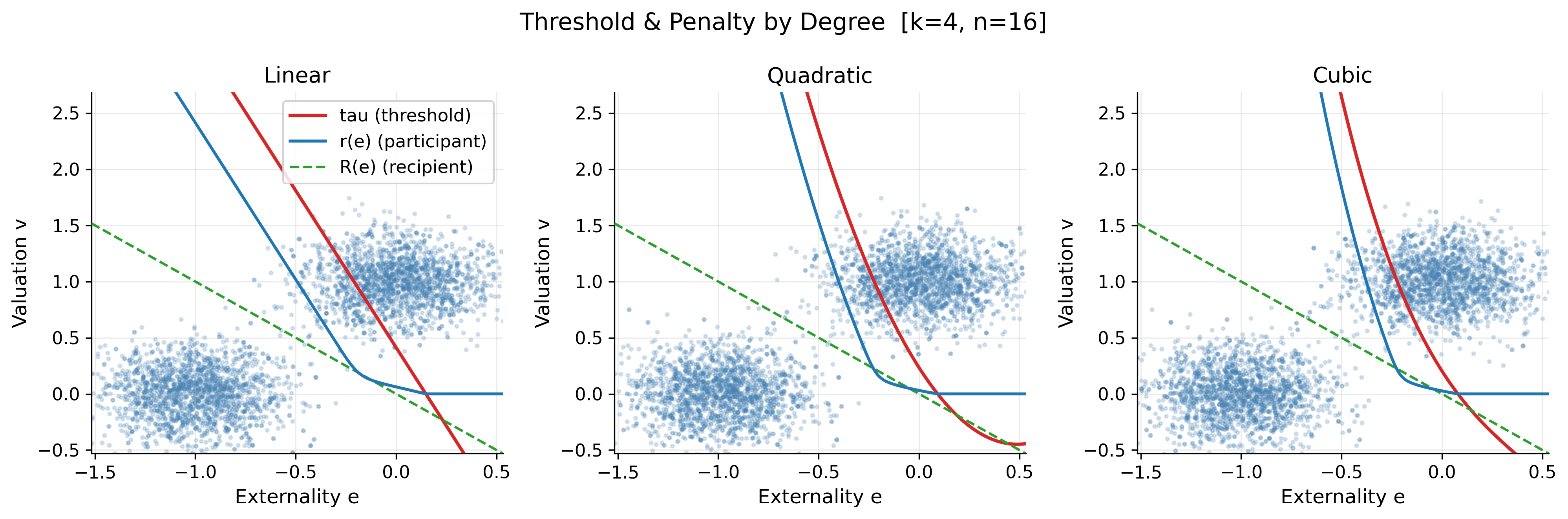}
    \caption{Threshold and penalty functions for diagonal bi-modal normally distributed synthetic data with $k=4$ and $n=16$.}
    \label{fig:c_penalty_by_degree_k4_n16}
\end{figure}

\begin{figure}
    \centering
    \includegraphics[width=\linewidth]{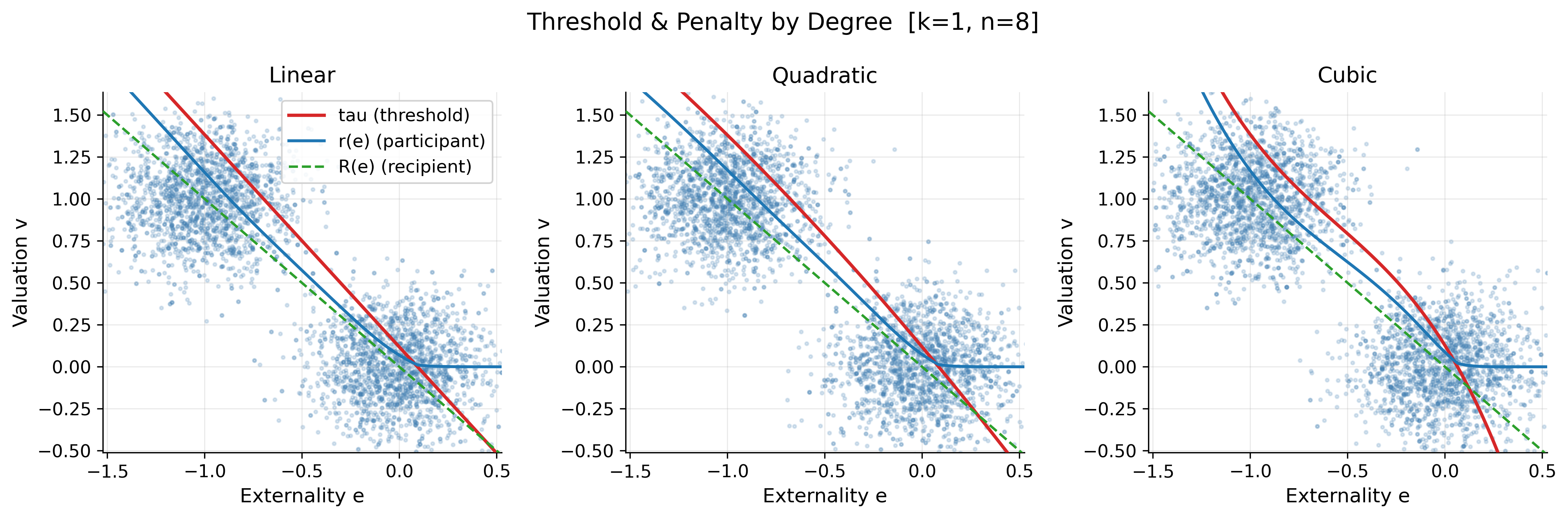}
    \caption{Threshold and penalty functions for diagonal bi-modal normally distributed synthetic data with $k=1$ and $n=8$.}
    \label{fig:d_penalty_by_degree_k1_n8}
\end{figure}

\begin{figure}
    \centering
    \includegraphics[width=\linewidth]{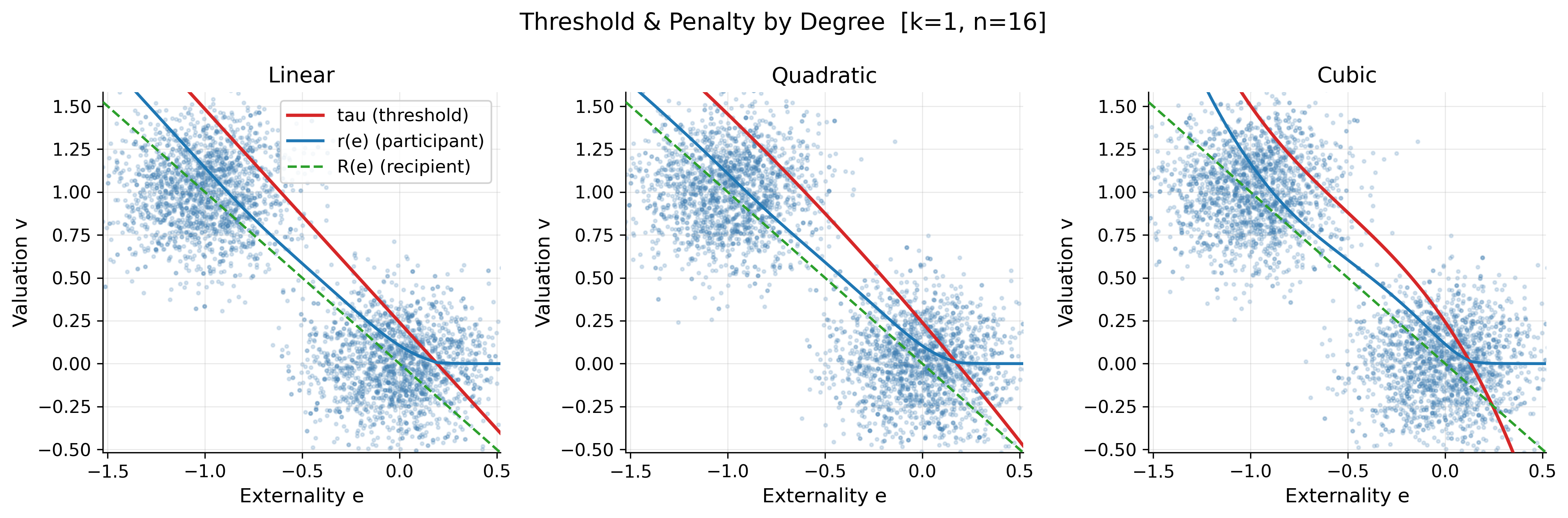}
    \caption{Threshold and penalty functions for diagonal bi-modal normally distributed synthetic data with $k=1$ and $n=16$.}
    \label{fig:d_penalty_by_degree_k1_n16}
\end{figure}

\begin{figure}
    \centering
    \includegraphics[width=\linewidth]{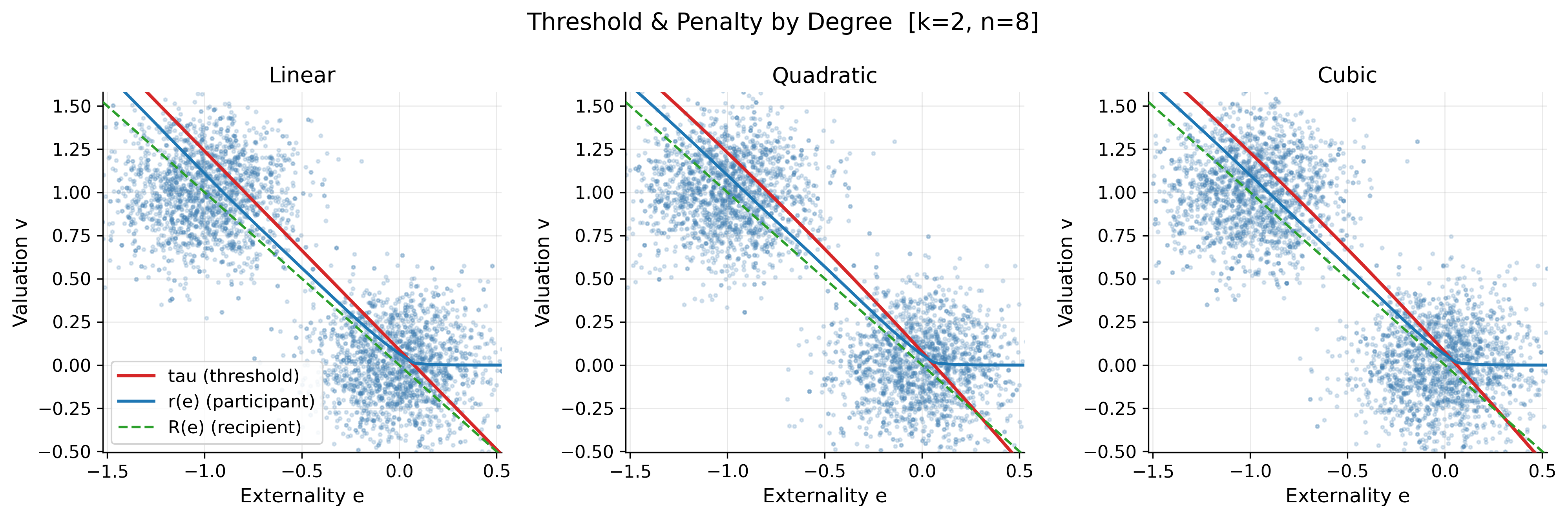}
    \caption{Threshold and penalty functions for diagonal bi-modal normally distributed synthetic data with $k=2$ and $n=8$.}
    \label{fig:d_penalty_by_degree_k2_n8}
\end{figure}

\begin{figure}
    \centering
    \includegraphics[width=\linewidth]{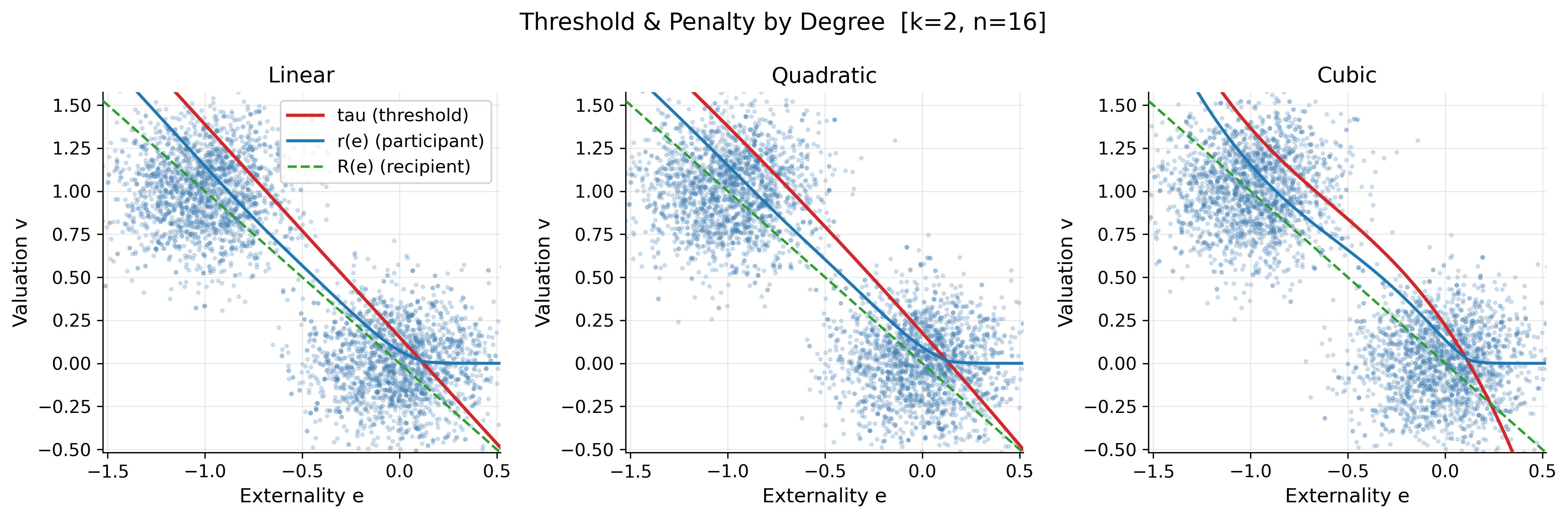}
    \caption{Threshold and penalty functions for diagonal bi-modal normally distributed synthetic data with $k=2$ and $n=16$.}
    \label{fig:d_penalty_by_degree_k2_n16}
\end{figure}

\begin{figure}
    \centering
    \includegraphics[width=\linewidth]{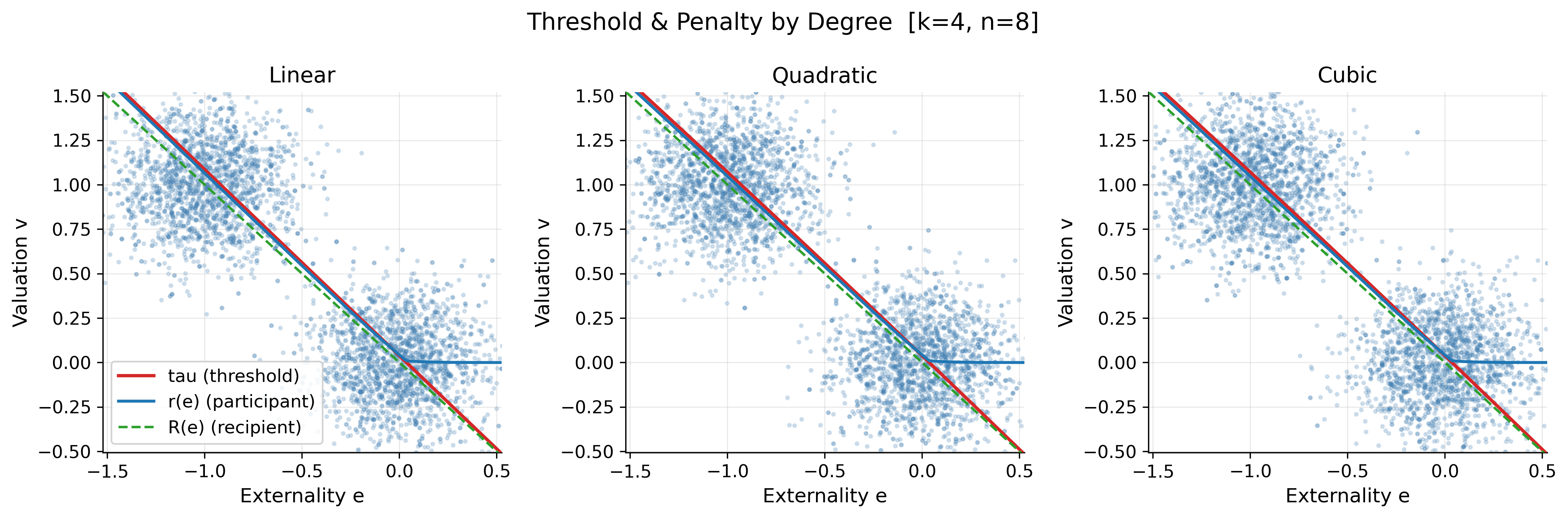}
    \caption{Threshold and penalty functions for diagonal bi-modal normally distributed synthetic data with $k=4$ and $n=8$.}
    \label{fig:d_penalty_by_degree_k4_n8}
\end{figure}

\begin{figure}
    \centering
    \includegraphics[width=\linewidth]{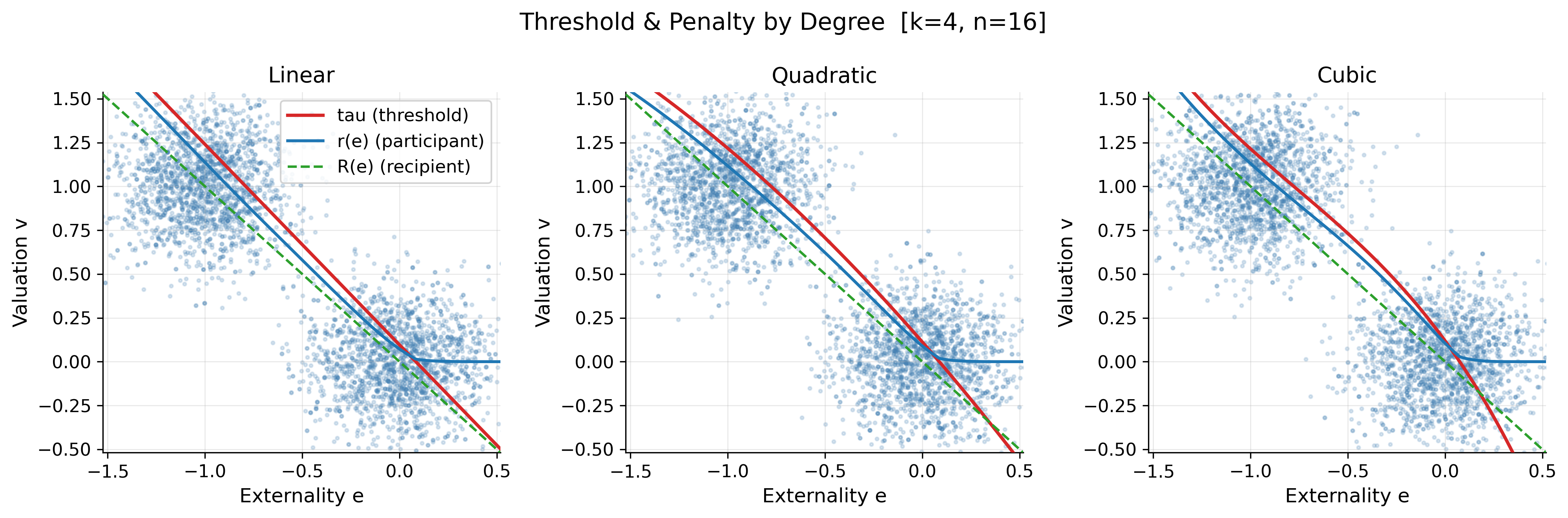}
    \caption{Threshold and penalty functions for diagonal bi-modal normally distributed synthetic data with $k=4$ and $n=16$.}
    \label{fig:d_penalty_by_degree_k4_n16}
\end{figure}

\FloatBarrier

\section{X Community Notes Dataset Construction}
\label{appendix:dataset}

\subsection{Dataset Collection and Processing}
\label{appendix:data-collection}

We construct an empirical dataset of X-Twitter posts annotated with Community Notes (which we refer to as the XNP400 dataset) using publicly available Community Notes data together with post-level metadata retrieved from the X API, as of February~2,~2025. Community Notes are labels which any User on X may choose to apply to a post, regarding how misleading the post is. Other users may also rate the helpfulness of any Note.

To collect these posts, we used a list of community notes downloaded from the \href{https://communitynotes.x.com/guide/en/under-the-hood/download-data}{X community note data library} in February, 2025. We then filtered the list of notes to ones with at least 400 ratings. Finally, we download posts linked to the remaining community notes using the X API and remove outlier posts by rating and action count. Therefore, posts in the XNP400 dataset have the following attributes: (i) the post has at least one Community Note, since posts without any note carry no externality signal; (ii) the post has positive recorded lifetime impressions, since per-month scores are otherwise undefined; (iii) the post is not among the top or bottom $0.1\%$ of posts by aggregate post rating $\beta_i$ or by total action count $\alpha_i$, trimming approximately 24 posts at each tail to stabilize the genetic optimization; and (iv) the post's agreement score---aggregate rating per 1{,}000 impressions---is not below $-400$, since we believe such posts might reflect disingenuous coordinated ratings rather than genuine user experiences. After applying all filters, the resulting dataset contains $N=11{,}853$ posts.

\subsection{Externality Estimation}
\label{appendix:externality-estimation}

For each post $i$, we compute an externality score using Community Notes and ratings using the following method. Each helpfulness rating is mapped to a numerical helpfulness score as follows, where the $l$'th rating on the $j$'th note on the $i$'th post is denoted by $\rho_{ijl}$.
\[
\rho_{ijl} =
\begin{cases}
1   & \texttt{HELPFUL} \\
0.5 & \texttt{SOMEWHAT\_HELPFUL} \\
0   & \texttt{NOT\_HELPFUL}.
\end{cases}
\]

We apply a misleading score of $-1$ to all Notes labeling the post as some form of misleading, and a misleading score of $+1$ to all Notes suggesting some form of not misleading, where $\pi_{ij}$ denotes the misleading score of Note $j$ on Post $i$.
\[
\pi_{ij} =
\begin{cases}
-1 & \texttt{MISINFORMED\_OR\_POTENTIALLY\_MISLEADING} \\
+1 & \texttt{NOT\_MISLEADING}.
\end{cases}
\]

We denote post $i$'s externality score with $\beta_i$, defined as follows, where $\mathcal{J}(i)$ denote the set of Community Notes attached to post $i$, and let $\mathcal{L}(i,j)$ denote the set of ratings associated with note $j\in\mathcal{J}(i)$.
\[
\beta_i
=
\sum_{j\in\mathcal{J}(i)}
\pi_{ij}
\left(
1+\sum_{l\in\mathcal{L}(i,j)} \rho_{ijl}
\right)
\]
This method simply counts the number of ratings in support of \texttt{Not Misleading} and \texttt{Misleading}, and subtracts the \texttt{Misleading} total from the \texttt{Not Misleading} total. We treat the note author as an implicit supportive rater. Therefore, a Post with overwhelming support for \texttt{Misleading} Notes results in a strongly negative values of $\beta_i$. Thus, we are assuming there is a linear relationship between the occurrences of Community Note ratings and monetary values of externality from posts.

Finally, to convert the externality score to an externality in dollars per month, we divide by the age of the post in months denoted by $\mu_i$, and multiply by an `externality magnitude' scaling parameter with units dollars per externality-score (also interpretable as per rating) denoted by $\zeta$. Thus, the estimated externality of post $i$ is
\[
e_i
=
\zeta \cdot \frac{\beta_i}{\mu_i}.
\]

Because this model of externalities is crude, and the true magnitude of externalities relative to valuations is uncertain, we calibrate $\zeta$ against external estimates and simulate audited auctions across a range of $\zeta$ values; see \Cref{sec:welfare-comparisons}.

\subsection{Valuation Estimation}
\label{appendix:valuation-estimation}

To approximate bidder valuations, we use post engagement activity as a proxy for advertiser value. On X-Twitter, promoted posts may be priced according to impressions or user actions such as likes, replies, reposts, and clicks \cite{x_corporation_x_2025, x_corporation_ads_2025, x_corporation_bidding_2025}. Consequently, posts generating greater engagement (actions per impression) plausibly correspond to higher advertiser valuations. Once again, this method should only be interpreted to provide a relative ranking and rough estimate of advertiser valuations.

Let $\alpha_i$ denote the total number of actions on post $i$, and let $\mu_i$ denote the age of the post in months. We estimate the valuation of post $i$ as
\[
v_i
=
\theta \cdot \frac{\alpha_i}{\mu_i},
\]
where $\theta\ge0$ is a scaling parameter representing approximate dollar value per action. Thus we are also assuming a linear relationship between actions per month and valuation. Estimates of advertiser costs on X-Twitter vary substantially across campaigns and targeting strategies, however, we consider an average valuation of \$1/action to be a reasonable rough estimate \cite{x_corporation_bidding_2025, webfx_how_2025}. So, for simplicity we fix $\theta = 1$ and vary $\zeta$. This also allows valuations and externality to be interpreted in units relative to each other rather than absolute units.

\subsection{Calibration of the Externality Scale \texorpdfstring{$\zeta$}{zeta}}
\label{appendix:zeta-calibration}

Because the externality scale $\zeta$ converts the dimensionless aggregate rating $\beta_i$ into a dollar externality, we anchor it to two published estimates of the dollar value of social-media content externalities. Each estimate is expressed per impression or per action, so we convert it into a cost per unit of aggregate rating by dividing by the corresponding sample mean from our dataset.

\paragraph{Impression-based estimate.}
Goldstein et al.~\cite{goldstein_economic_2014} estimate a cost of approximately \$1.53 per thousand spam impressions (a CPM of \$1.53). Dividing the per-impression cost by the sample mean aggregate rating per impression, $\overline{\beta_i/\eta_i}$ (where $\eta_i$ is post $i$'s lifetime impressions), gives
\begin{equation}
\zeta^{\text{imp}}=\frac{1.53/1000}{\;\overline{\beta_i/\eta_i}\;}\approx \$2.71 \text{ per rating unit per month.}
\end{equation}

\paragraph{Action-based estimate.}
Schnadower Mustri et al.~\cite{schnadower_mustri_behavioral_2023} estimate a cost of \$0.54 per interaction with a targeted advertisement. Dividing by the sample mean aggregate rating per action, $\overline{\beta_i/\alpha_i}$, gives
\begin{equation}
\zeta^{\text{act}}=\frac{0.54}{\;\overline{\beta_i/\alpha_i}\;}\approx \$2.54 \text{ per rating unit per month.}
\end{equation}

The two estimates agree closely, yielding a baseline calibration of $\zeta\approx\$2.5$; the representative grid value $\zeta=2.8$ used in the main-text sweep sits close to both. Because the underlying mapping from Community Note activity to monetary externalities is inherently uncertain, we treat $\zeta$ as a free parameter and report results across the range $\zeta\in\{1,2.8,4.6,6.4,8.2,10\}$.

\subsection{Descriptive Statistics}
\label{appendix:descriptive-statistics}

Table~\ref{tab:tweet_summary_stats} reports descriptive statistics for the post-level metrics used in constructing the empirical valuation and externality estimates.
\begin{table}[htbp]
\centering
\caption{Summary statistics for post-level metrics.  ($N=11{,}811$), \cref{appendix:data-collection}).}
\label{tab:tweet_summary_stats}
\small
\resizebox{\textwidth}{!}{%
\begin{tabular}{lrrrrrrr}
\toprule
Metric & Mean & Std.\ Dev. & Min & p25 & Median & p75 & Max \\
\midrule
Lifetime Impressions & 13,550,715 & 23,833,062 & 7,703.0 & 1,904,286 & 5,087,266 & 14,635,842 & 634,387,512 \\
Impressions per Month & 3,382,788 & 9,950,972 & 1,135.8 & 233,377.5 & 758,325.6 & 2,565,235 & 358,436,171 \\
Lifetime Engagements ($\alpha_i$) & 70,824.7 & 135,363.4 & 150 & 5,880.5 & 20,221.0 & 70,909.0 & 1,519,232 \\
Aggregate Post Rating ($\beta_i$) & -892 & 1,338.4 & -13,331.5 & -1,210.2 & -746 & -475 & 9,319.5 \\
Aggregate Rating per Impression & -4.75e{-}4 & 0.00157 & -0.0705 & -5.01e{-}4 & -1.78e{-}4 & -4.92e{-}5 & 0.0216 \\
Aggregate Rating per Month & -200 & 647 & -21,341.0 & -222 & -93.7 & -42.6 & 10,745.0 \\
Value Score ($v_i$) & 19,764.1 & 61,655.1 & 8.73 & 662 & 2,823.3 & 13,196.4 & 1,611,199 \\
Externality Score ($e_i$, $\zeta{=}1$) & -200 & 647 & -21,341.0 & -222 & -93.7 & -42.6 & 10,745.0 \\
\bottomrule
\end{tabular}}

\vspace{2pt}
{\small\textit{Note:} Rate variables are computed by dividing lifetime totals by post age in months as of February~2,~2025.}
\end{table}

Table~\ref{tab:tweet_correlations} reports pairwise Pearson correlations among several constructed post-level metrics.

\begin{table}[htbp]
\centering
\caption{Pairwise Pearson correlations between post-level metrics.}
\label{tab:tweet_correlations}
\small
\begin{tabular}{lrrrrr}
\toprule
Metric & (1) & (2) & (3) & (4) & (5) \\
\midrule
(1)~Impressions per Month    & 1.00  &        &  &  &  \\
(2)~$\beta$ per Impression  & 0.09  & 1.00   &  &  &  \\
(3)~$\alpha$ per Impression & -0.05 & -0.18  & 1.00 &  &  \\
(4)~$\beta$ per Month       & 0.06  & 0.08   & 0.01 & 1.00 &  \\
(5)~$\alpha$ per Month      & 0.72  & 0.08   & 0.20 & 0.08 & 1.00 \\
\bottomrule
\end{tabular}

\vspace{2pt}
{\small\textit{Note:} Lower triangle only; diagonal entries equal 1 by construction.}
\end{table}

Figure~\ref{fig:post_note_dist} reports distributions of the number of Notes on X Posts, the type of Notes, and the type of Ratings on Notes.

\begin{figure}[ht]
    \centering
    \includegraphics[width=\linewidth]{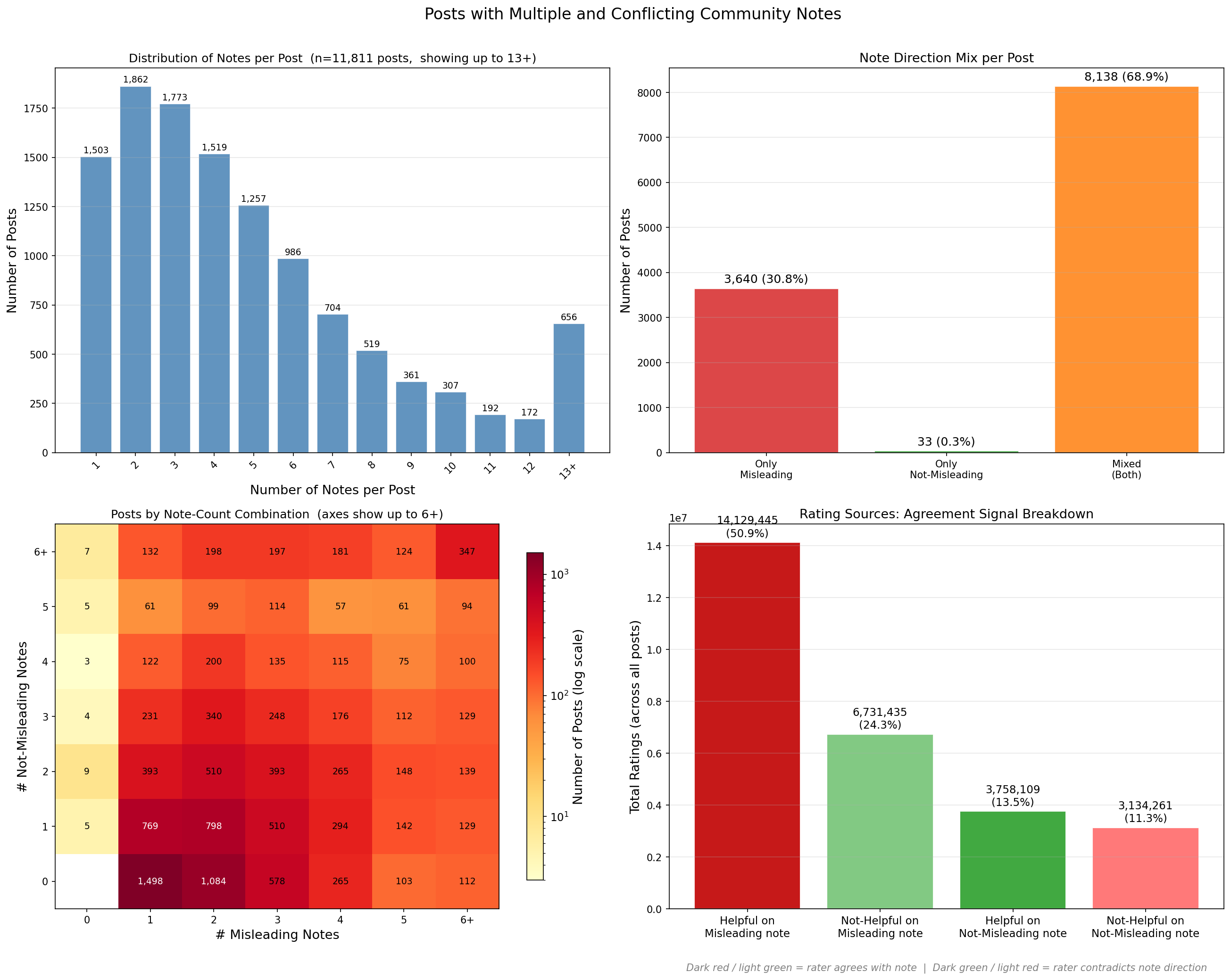}
    \caption{Distributions of Notes and Ratings on X Posts}
    \label{fig:post_note_dist}
\end{figure}

Figure~\ref{fig:empirical_dist} displays the distribution of unadjusted externality and valuation scores in the XNP400 dataset.

\begin{figure}[ht]
    \centering
    \includegraphics[width=0.75\linewidth]{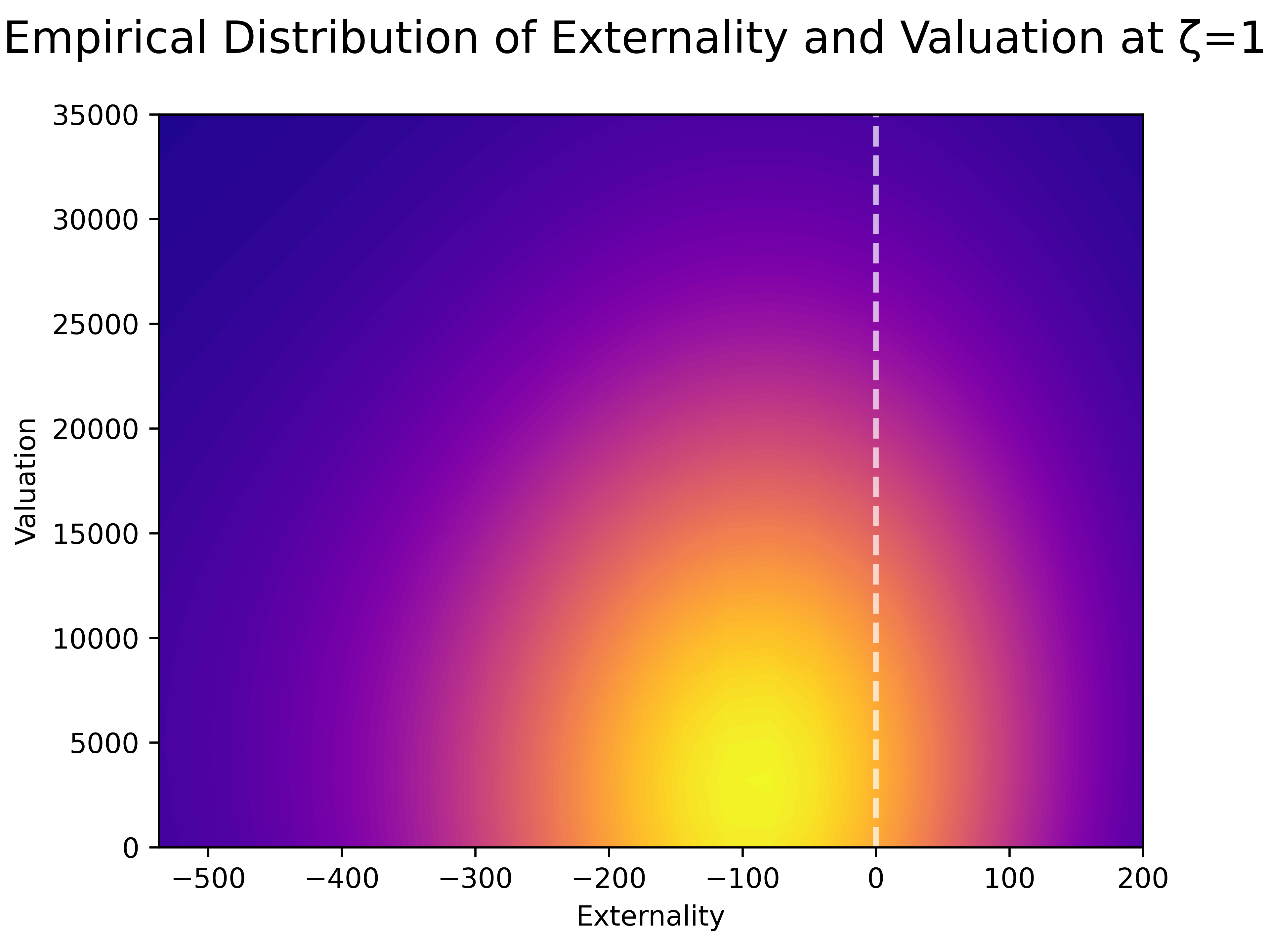}
    \caption{Distribution of externality and valuation scores on X Posts}
    \label{fig:empirical_dist}
\end{figure}

\FloatBarrier

\section{Online Supplemental Materials}
\label{appendix:git-repo}

Code and other online supplemental materials can be found at

\noindent \href{https://github.com/gabemgem/audited-auctions-supplemental}{https://github.com/gabemgem/audited-auctions-supplemental}.

\end{document}